\newtheorem*{theorem*}{Theorem}
\newcommand{\mk}[1]{\textcolor{red}{\textbf{Mehrdad:} #1}}
\newcommand{\MDP}{\mathcal{M}}
\newcommand{\States}{S}
\newcommand{\Actions}{A}
\newcommand{\transition}{P}
\newcommand{\Config}{\mathcal{C}}
\newcommand{\TREE}{\mathcal{T}}
\newcommand{\honest}{\mathcal{H}}
\newcommand{\Own}{\mathcal{O}}
\newcommand{\type}{\textsc{type}}
\newcommand{\MCAttack}{\textsc{tree attack}}
\newcommand{\RER}{\texttt{ERRev}}
\newcommand{\MP}{\texttt{MP}}
\begin{document}

%%
%% The "title" command has an optional parameter,
%% allowing the author to define a "short title" to be used in page headers.
\title{Fully Automated Selfish Mining Analysis in Efficient Proof Systems Blockchains}
%\title{The Name of the Title Is Hope}

%%
%% The "author" command and its associated commands are used to define
%% the authors and their affiliations.
%% Of note is the shared affiliation of the first two authors, and the
%% "authornote" and "authornotemark" commands
%% used to denote shared contribution to the research.

\author{Krishnendu Chatterjee}
\orcid{0000-0002-4561-241X}
\email{krishnendu.chatterjee@ist.ac.at}
\affiliation{%
  \institution{IST Austria}
  \country{Austria}
}

\author{Amirali Ebrahimzadeh}
\orcid{0000-0002-9887-4341}
\email{ebrahimzadeh.amirali@gmail.com}
\affiliation{%
  \institution{Sharif University of Technology}
  \country{Iran}
}

\author{Mehrdad Karrabi}
\orcid{0009-0007-5253-9170}
\email{mehrdad.karrabi@ist.ac.at}
\affiliation{%
  \institution{IST Austria}
  \country{Austria}
}

\author{Krzysztof Pietrzak}
\orcid{0000-0002-9139-1654}
\email{krzysztof.pietrzak@ist.ac.at}
\affiliation{%
  \institution{IST Austria}
  \country{Austria}
}

\author{Michelle Yeo}
\orcid{0009-0001-3676-4809}
\email{mxyeo@nus.edu.sg}
\affiliation{%
  \institution{National University of Singapore}
  \country{Singapore}
}

\author{\DJ or\dj e \v{Z}ikeli\'c}
\orcid{0000-0002-4681-1699}
\email{dzikelic@smu.edu.sg}
\affiliation{%
  \institution{Singapore Management University}
  \country{Singapore}
}

%%
%% By default, the full list of authors will be used in the page
%% headers. Often, this list is too long, and will overlap
%% other information printed in the page headers. This command allows
%% the author to define a more concise list
%% of authors' names for this purpose.
\renewcommand{\shortauthors}{K. Chatterjee et al.}

%%
%% The abstract is a short summary of the work to be presented in the
%% article.
\begin{abstract}
We study selfish mining attacks in longest-chain blockchains like Bitcoin, but where the proof of work is replaced with efficient proof systems -- like proofs of stake or proofs of space -- and consider the problem of computing an optimal selfish mining attack which maximizes expected relative revenue of the adversary, thus minimizing the chain quality. To this end, we propose a novel selfish mining attack that aims to maximize this objective and formally model the attack as a Markov decision process (MDP). We then present a formal analysis procedure which computes an $\epsilon$-tight lower bound on the optimal expected relative revenue in the MDP and a strategy that achieves this $\epsilon$-tight lower bound, where $\epsilon>0$ may be any specified precision. Our analysis is fully automated and provides formal guarantees on the correctness. We evaluate our selfish mining attack and observe that it achieves superior expected relative revenue compared to two considered baselines.

In concurrent work [Sarenche FC'24] does an automated analysis on selfish mining in \emph{predictable} longest-chain blockchains based on efficient proof systems. Predictable means the randomness for the challenges is fixed for many blocks (as used e.g., in Ouroboros), while 
we consider \emph{unpredictable} (Bitcoin-like) chains where the challenge is derived from the previous block.

\end{abstract}

%%
%% The code below is generated by the tool at http://dl.acm.org/ccs.cfm.
%% Please copy and paste the code instead of the example below.
%%

\begin{CCSXML}
<ccs2012>
   <concept>
       <concept_id>10002978.10002986.10002990</concept_id>
       <concept_desc>Security and privacy~Logic and verification</concept_desc>
       <concept_significance>500</concept_significance>
       </concept>
   <concept>
       <concept_id>10002978.10002979.10002983</concept_id>
       <concept_desc>Security and privacy~Cryptanalysis and other attacks</concept_desc>
       <concept_significance>500</concept_significance>
       </concept>
   <concept>
       <concept_id>10002978.10003006.10003013</concept_id>
       <concept_desc>Security and privacy~Distributed systems security</concept_desc>
       <concept_significance>500</concept_significance>
       </concept>
 </ccs2012>
\end{CCSXML}

\ccsdesc[500]{Security and privacy~Logic and verification}
\ccsdesc[500]{Security and privacy~Cryptanalysis and other attacks}
\ccsdesc[500]{Security and privacy~Distributed systems security}

%%
%% Keywords. The author(s) should pick words that accurately describe
%% the work being presented. Separate the keywords with commas.

\keywords{Blockchain, Formal Methods, Efficient Proof Systems, Selfish Mining, Markov Decision Process}

\copyrightyear{2024}
\acmYear{2024}
\setcopyright{rightsretained}
\acmConference[PODC '24]{ACM Symposium on Principles of Distributed
Computing}{June 17--21, 2024}{Nantes, France}
\acmBooktitle{ACM Symposium on Principles of Distributed Computing (PODC
'24), June 17--21, 2024, Nantes, France}\acmDOI{10.1145/3662158.3662769}
\acmISBN{979-8-4007-0668-4/24/06}
% \received{20 February 2007}
% \received[revised]{12 March 2009}
% \received[accepted]{5 June 2009}

%%
%% This command processes the author and affiliation and title
%% information and builds the first part of the formatted document.
\maketitle

\section{Introduction}\label{sec:intro}

\paragraph{Bitcoin.}
Blockchain protocols were proposed as a solution to achieve consensus over some states (e.g., financial transactions) in a distributed and permissionless (i.e., everyone can participate in securing the chain) setting. Participants in blockchain protocols add data into blocks, which are then appended to the blockchain with some probability that depends on the underlying consensus protocol.
The earliest and most commonly adopted consensus protocol is proof of work (PoW), which also forms the basis of the Bitcoin blockchain protocol~\cite{nakamoto2008bitcoin}. 

\smallskip{\em Proof of work.}
The parties that maintain a PoW blockchain like Bitcoin are called miners. The general idea is that in order to add a block to the chain, the miners derive a computationally hard but easily verifiable puzzle from the tip of the chain. To add a block to the chain, this block must contain a solution to this puzzle. This mechanism ensures that attacking the chain, in particular rewriting past blocks in a double spending attack, is computationally very expensive. 
%As these puzzles are computationally hard, adding a block to the blockchain expends significant computational power, deterring attacks on the chain. 
In Bitcoin, the puzzle is a hashcash style PoW~\cite{hashcash}, with parameters being a difficulty level $D$ and a global hash function (e.g., SHA256). 
A newly generated block can only be added to the chain if the hash of the new block, the previous block, the miner's public key and some nonce give a value that is less than the current difficulty $D$.
To mine a block in Bitcoin, miners will continuously generate different nonces and hash them until they find a nonce that passes the threshold. 
The lucky miner that first finds a block broadcasts it across the network, where other miners can easily verify the validity of the nonce. 
The Bitcoin protocol specifies that miners should always work towards extending the longest chain they are aware of, hence such chains are called ``longest-chain blockchains''. An alternative approach is chains based on Byzantine Fault Tolerance like  Algorand~\cite{algorand} where randomly rotating committees approve blocks to be added.
%this block becomes the new tip of the chain and all the (honest) miners move on and now try to extend this new tip of the chain.

%\smallskip{\em Environmental impact of PoW.}
%Using PoW puzzles to generate blocks comes at a huge environmental impact.
%In May $2023$, the energy consumption of Bitcoin is $95.58$TWH, which is $0.5\%$ of global energy consumption and also surpasses the energy consumption of Finland~\cite{btcenergy}. 
%Apart from the energy consumption, there is also significant hardware waste as the specialized nature of Bitcoin mining hardware renders them useless for anything other than computing PoW puzzles~\cite{GaneshOTZ21}.

\smallskip{\em Efficient proof systems.}
As an alternative to PoW, several other consensus protocols based on efficient proof systems  have been proposed~\cite{ouroboros,spacemint,CohenP23}. 
Generally, we say a ``proof of X'' is efficient if computing a proof for a random challenge is very cheap assuming one has the resource X. 
The most popular and best investigated efficient proofs are Proofs of Stake (PoStake) as e.g., used in Ouroboros~\cite{ouroboros} or post-merge Ethereum. Here the coins recorded in the blockchain are the resource. Another proposal of an actual physical resource are Proofs of Space (PoSpace)~\cite{DziembowskiFKP15}, where the resource is disk-space. The first proposal for a PoSpace based chain was Spacemint~\cite{spacemint}. The first deployed chain was the Chia network~\cite{CohenP23}, it uses PoSpace in combination with verifiable delay functions (VDFs) to address some security challenges, and thus is referred to as a Proofs of Space and Time (PoST) based chain. 
See Appendix~\ref{sec:efficientproofsystems} for more details of blockchains based on efficient proof systems that we consider in our work.

%PoStake) and blockchains based on Proofs of Space~\cite{spacemint}.
%Proofs of Space and Time~\cite{CohenP23} (PoST), more details follow in Section~\ref{sec:efficientproofsystems}.

\smallskip{\em Selfish mining.}
There are various security properties we want from longest-chain blockchains, the most important ones being persistence and liveness~\cite{GarayKL15}. Informally, persistence means that entries added to the chain will remain there forever, while liveness means that the chain remains available. 
A less obvious requirement is fairness, in the sense that a party that contributes a $p$ fraction of the resource (hashing power in PoW, space in PoST, staked coins in PoStake) should contribute a $p$ fraction of the blocks, and thus get a $p$ fraction of the rewards. 

It was first observed in~\cite{EyalS18} that Bitcoin is not fair in this sense due to \emph{selfish mining} attacks. In selfish mining, attackers mine blocks but occasionally selectively withhold these (so the honest miners cannot mine on top of those) to later release them, overtaking the honest chain and thus orphaning honest blocks. In doing so, the attackers can reduce the \emph{chain quality} of the blockchain~\cite{EyalS18}, this measure quantifies the fraction of blocks contributed by an adversarial miner. 
%Ideally it should be $p$, or at lest close to it, if the adverserial miner has a $p$ fraction of the resource, if this is not the case this affects other security properties of the chain~\cite{GarayKL15}.

%, which is a fundamental property of blockchains needed to ensure agreement and validity~\cite{GarayKL15}. 
Although the analysis of selfish mining and its impact on chain quality is well-studied in Bitcoin and other PoW-based blockchains \cite{EyalS18,SapirshteinSZ16,ZurET20,squirl}, a thorough analysis of optimal selfish mining attacks in blockchains based on efficient proof systems is difficult due the fact that the computational cost of generating proofs is low in these systems. This gives rise to several issues that in the PoStake setting are generally referred to as the ``nothing-at-stake'' (NaS) problem.
\footnote{We will slightly abuse terminology in our work and continue to use the terms ``mining'' and ``miners'' from PoW-based chains also when discussing chains based on efficient proof systems even though when using  PoStake this is sometimes referred to as ``proposing'' and ``proposers'' (but it is not coherent, some works like~\cite{FerreiraW21,FerreiraHWY22} use mining) while in PoSpace it was suggested to use the term ``farming'' and ``farmers''.}
%We choose to stick to ``mining'' as a terminology for canonical reasons: the strategies we analyse are deeply influenced by selfish mining in Bitcoin,  as well as for clarity: we want a unifying terminology for diverse efficient proof systems and not introduce new terminology for each existing protocol. 
%We note that recent work analysing selfish mining attacks in PoStake blockchains also use ``mining'' as terminology~\cite{FerreiraW21,FerreiraHWY22}. 

\smallskip{\em The security vs predictability dilemma.}
Assume one would na\"ively replace PoW in Bitcoin with an efficient proof system like PoStake. 
As computing proofs is now cheap, an adversarial miner can try to extend blocks at different depths (not just on the tip of the longest chain), growing private trees at each depth. If they manage to create a tree of depth $d$ that starts less than $d$ blocks deep in the public chain, releasing the longest path in this tree will force the honest miners to switch to this path. 
Such ``tree growing'' attacks can be prevented by diverting from Bitcoin-like protocols, and instead of using the block at depth $i$ to derive the challenge for the block at depth $i+1$, one uses some fixed randomness for a certain number of consecutive blocks. 
Unfortunately, this creates a new security issue as an adversary can now predict when in the future they will be able to create blocks~\cite{CohenNPW18,BagariaDKOTVWZ22}.
We note that this approach is used in the PoStake-based Ouroboros~\cite{ouroboros} chain, where one only creates a fresh challenge every five days. 
An extreme in the other direction is the PoSpace-based Spacemint~\cite{spacemint} chain or an early proposal of the PoST-based Chia chain~\cite{CohenP19} which, like Bitcoin, derive the challenge from the previous block (the deployed Chia design~\cite{CohenP23} uses a fresh challenge every 10 minutes, or 32 blocks).

\smallskip{\em Limitations of previous analyses.}
The persistence, i.e., security against \emph{double spending} attacks, in predictable (Ouroboros-like) and unpredictable (Bitcoin-like) longest-chain blockchains is pretty well understood~\cite{BagariaDKOTVWZ22}. In particular, using the ``tree growing'' attack outlined above, an adversary can ``amplify'' its resources in an unpredictable protocol by a factor of $e\approx 2.72$. Thus, security requires that the adversary controls less than $\frac{1}{1+e}\approx 0.269$ fraction of the total resources. In a predictable protocol, it is sufficient for the adversary to have $< \frac{1}{2}$ of the resource. This is better, but being predictable opens new attack vectors like bribing attacks. 
Unfortunately, the security of blockchain protocols based on efficient proof systems in the face of \emph{selfish mining} attacks is a lot less understood. In particular, previous analyses in this area suffer mainly from two limitations:
\begin{enumerate}[leftmargin=*]
    \item (Model) 
    Although there have also been some very recent works analysing \emph{selfish mining} attacks in blockchains based on efficient proof systems~\cite{CohenP23,WangKBKOTV19,FanZ17,FerreiraW21,squirl,RoozbehSB24}, these analyses so far have only focused on predictable protocols. 
    %This is in part due to tractability reasons in order to limit the amount of adversarial forking and hence strategy space~\cite{RoozbehSB24}.
    \item (Methodology) Furthermore, they either consider different adversarial objectives~\cite{CohenP23,WangKBKOTV19,FanZ17,FerreiraW21} or use deep reinforcement learning to obtain selfish mining strategies~\cite{squirl,RoozbehSB24}. However, deep reinforcement learning only empirically maximizes the objective and does not provide \emph{formal guarantees} on the quality (lower or upper bound) of learned strategies.
    %There have been some recent work attempting to use deep reinforcement learning to find approximately optimal selfish mining strategies in blockchain protocols based on efficient proof systems~\cite{squirl,RoozbehSB24}. However, using deep reinforcement learning does not provide any \emph{formal guarantees} on the optimality of the learned strategy.
\end{enumerate}

%\smallskip{\em Limitations of previous analyses.}

%There have also been some very recent work analysing \emph{selfish mining} attacks in blockchains based on efficient proof systems~\cite{CohenP23,WangKBKOTV19,FanZ17,FerreiraHWY22,RoozbehSB24}. However, these analyses so far have only focused on predictable protocols. 

%Finally, we note that 

%KP:just highlighting new stuff in blue 
{\color{blue}

}

\smallskip{\em Our approach.}
In this work, we present the first analysis of selfish mining in unpredictable longest-chain blockchains based on efficient proof systems. Recall that in such chains the challenge for each block is determined by the previous block. 
%Recall that in this setting, the adversary cannot predict any future block miners and thus they can grow private trees at any point in the blockchain. 
At each time step, the adversary has to decide whether to mine new blocks or to publish one of their private forks which is longer than the public chain. 
%In this work, we operate under the realistic randomness model where the adversary cannot predict any future block proposers and thus can grow private trees at any point in the blockchain. 
%In this work, we consider a selfish mining objective which is directly related to the chain quality and a selfish mining attack that optimizes this objective in blockchains based on efficient proof systems.
%The objective that we consider is the {\em expected relative revenue}, i.e.~the ratio of the expected revenue an adversarial coalition can achieve when following a given strategy compared to the total expected revenue of all parties. 
%In doing so, we study the \emph{expected change} in chain quality of the blockchain given an adversarial mining strategy (see Figure~\ref{fig:cq}), thus overcoming the limitations of prior work discussed above.
%Our selfish mining attack is motivated by the observation that, in blockchains based on efficient proof systems, the adversary can cheaply mine simultaneously on multiple blocks. Hence, our attack allows the adversary to create several private forks at different blocks in the main chain, and at each time step choose whether to mine new blocks or to publish one of the private forks which is longer than the public chain. 
Our analysis is concerned with finding the optimal sequence of mining and fork reveal actions that the adversary should follow in order to maximize the expected relative revenue (i.e.,~the ratio of the expected adversarial blocks in the main chain when following a given strategy compared to the total number of blocks in the chain). Note that this is a challenging problem. Since at each time step the adversary can choose between mining new blocks and publishing {\em any} of its private chains, the strategy space of the adversary is exponential in the number of privately mined forks which makes the manual formal analysis intractable.

To overcome this challenge, we model our selfish mining attack as a {\em finite-state Markov Decision Process} {\em (MDP)}~\cite{Puterman94}. We then present a formal analysis procedure which, given a precision parameter $\epsilon > 0$, computes 
\begin{itemize}[leftmargin=*]
    \item an {\em $\epsilon$-tight lower bound on the optimal expected relative revenue} that a selfish mining strategy in the MDP can achieve, and
    \item a {\em selfish mining strategy} achieving this $\epsilon$-tight lower bound.
\end{itemize}
At the core of our formal analysis procedure lies a reduction from the problem of computing an optimal selfish mining strategy under the expected relative revenue objective to computing an optimal strategy in the MDP under a {\em mean-payoff objective} for a suitably designed reward function. While we defer the details on MDPs and mean-payoff objectives to Section~\ref{sec:prelims}, we note that solving mean-payoff finite-state MDPs is a classic and well-studied problem within the formal methods community for which efficient (polynomial-time) algorithms exist~\cite{Puterman94,filar2012competitive} and there are well-developed tools that implement them together with further optimizations~\cite{storm,prism}. These algorithms are {\em fully automated} and provide {\em formal guarantees} on the correctness of their outputs, and the formal analysis of our selfish mining attack naturally inherits these desirable features.

\smallskip{\em Contributions.} Our contributions can be summarized as follows:
\begin{enumerate}[leftmargin=*]
    \item We study selfish mining in unpredictable efficient proof systems blockchain protocols where the adversary's goal is to maximize {\em expected relative revenue} and thus minimize chain quality.
    \item We propose a {\em novel selfish mining attack} that optimizes the expected relative revenue in unpredictable blockchain protocols based on efficient proof systems. We {\em formally model} the attack as an MDP.
    \item We present a {\em formal analysis procedure} for our selfish mining attack. Given a precision parameter $\epsilon > 0$, our formal analysis procedure computes an {\em $\epsilon$-tight} lower bound on the optimal expected relative revenue in the MDP together with a {\em selfish mining} strategy that achieves it. The procedure is {\em fully automated} and provides {\em formal guarantees} on its correctness.
    \item Our formal analysis is {\em flexible} to changes in system model parameter values. For instance, it allows us to tweak system model parameters and study their impact on the optimal expected relative revenue that selfish mining can achieve in a {\em fully automated fashion} and without the need to develop novel analyses for different parameter values. This is in stark contrast to formal analyses whose correctness is proved manually, see Section~\ref{sec:methdiscussion} for a more detailed discussion.
    \item We implement the MDP model and the formal analysis procedure and {\em experimentally evaluate} the quality of the expected relative revenue achieved by the computed selfish mining strategy. We compare our selfish mining attack to two baselines: (1)~honest mining strategy and (2)~a direct extension of the PoW selfish mining strategy of~\cite{EyalS18} to the setting of blockchains based on efficient proof systems. Our experiments show that our selfish mining strategy achieves higher expected relative revenue compared to the two baselines. 
\end{enumerate}

\smallskip{\em A Remark on the Model.}
While the selfish mining attacks analyzed in this paper apply to PoStake, PoSpace and PoST based longest-chain blockchains, they capture PoST better than PoStake and PoSpace as we will shortly outline now. Due to the use of VDFs, PoST is ``strongly unpredictable'' in the sense that a miner cannot predict when they will be able to extend any block, while PoStake and PoSpace are ``weakly unpredictable'' in the sense that a miner can predict when they will find blocks on top of their own (but not other) blocks. Moreover, in PoST a malicious miner must run a VDF on top of every block they try to extend, while in PoStake or PoSpace this comes basically for free. 
The class of selfish mining attacks analyzed in this paper does not exploit ``weak unpredictability'' (a necessary assumption for PoST, but not PoStake or PoSpace) and, to be able to give automated bounds, we also assume a bound on the number of blocks one tries to extend, which is a realistic assumption for PoST (as each block requires a VDF) but less so for PoSpace or PoStake.

\subsection{Related Work}\label{sec:related}
\paragraph{Selfish mining in Bitcoin.}
One of the motivations behind the initial design of Bitcoin and other PoW blockchain systems is fairness.
%One of the motivations behind the initial design of Bitcoin and other PoW blockchain systems is fairness.
That is, a miner controlling $p \in [0,1]$ proportion of resources should generate blocks at the rate of $p$.
%Consequently, the expected reward for a chain of length $\ell$ of a miner with $p$ fraction of resources should be $Bp\ell$, where $B$ is the block reward.
This led to the initial analysis claiming Bitcoin is fair when the total resource $p$ owned by adversarial miners is bounded from above by $\frac{1}{2}$~\cite{GarayKL15}.
Nevertheless,~\cite{EyalS18} outlined an attack called ``selfish mining" in Bitcoin which shows that even for $p < \frac{1}{2}$ it is possible to generate blocks at a rate strictly greater than $p$, implying that Bitcoin is inherently ``unfair".
The attack secretly forks the main chain and mines blocks in a private, unannounced chain. These blocks are only revealed when the private chain is longer than the main chain, thus forcing the honest miners to switch from the main chain to the private chain. 
This causes the honest miners to waste their computational resources on the now shorter public chain.
It is shown in~\cite{EyalS18} that due to selfish mining, the largest amount of adversarial resources that can be tolerated while ensuring the security of Bitcoin is $\frac{1}{3}$ in the optimistic setting where honest miners propagate their blocks first, and $\frac{1}{4}$ if both honest and adversarial miners propagate their blocks first with equal probability.

\smallskip{\em NaS selfish mining.}
NaS selfish mining can be studied under several different adversarial objectives, see Appendix~\ref{app:sm_objectives} for a thorough discussion and comparison of these objectives. The objective we consider in our work is the expected relative revenue of the adversary, which is also considered in the analysis of~\cite{RoozbehSB24} for PoStake.
There have been earlier analyses that consider several other objectives, however. The works of~\cite{WangKBKOTV19} and~\cite{CohenP23} studied the probability of an adversarial coalition overtaking the honest chain and showed respectively that under this objective, the largest fraction of adversarial resources that can be tolerated in PoStake based blockchains is $\frac{1}{1+e} \approx 0.269$, and between $\frac{1}{1+e} < p < \frac{1}{2}$ for Chia, a PoST-based blockchain.
%(depending on how many blocks adversarial and honest miners can extend).
The work of~\cite{FanZ17} studied selfish mining strategies in PoStake blockchains under the objective of finding strategies that give the adversarial coalition a larger payoff compared to following the honest strategy. They showed that the expected advantage of the adversary when growing a private tree from the genesis block is $e$ times larger than the expected revenue of following the honest strategy, implying that the maximum fraction of adversarial resource that can be tolerated to be secure under this strategy and objective is $\frac{1}{e} \approx 0.37$.

\smallskip{\em Optimal selfish mining strategies.}
There have been some works that go beyond proposing and analysing specific selfish mining strategies to actually claiming the \emph{optimality} of these strategies.
%Thus far all aforementioned works only propose and analyse specific selfish mining strategies but make no claim on the \emph{optimality} of these strategies.
The work of~\cite{SapirshteinSZ16} modelled the Bitcoin protocol as an MDP and proposed a method using binary search to solve it approximately in order to find an optimal selfish mining strategy.
~\cite{ZurET20} improved the search method in~\cite{SapirshteinSZ16}, resulting in a method that finds an optimal strategy but with an order of magnitude less computation.
~\cite{squirl} and~\cite{RoozbehSB24} use deep reinforcement learning to automate discovery of attack strategies in Bitcoin and Nakamoto-PoStake, respectively, but without guarantees on optimality. 
Finally,~\cite{FerreiraHWY22} suggested modelling mining strategies in PoStake as an MDP and using an MDP solver to find optimal selfish mining strategies. However, due to the infinite size of their MDP, finding even an approximately optimal solution is undecidable. 
%We stress, however, that all of these works either focus on finding and analysing optimal selfish mining strategies in PoW style blockchains or, operate under an unrealistic model for blockchains based on PoStake. %and, to the best of our knowledge, there is been no work thus far on optimal selfish mining strategies for blockchains based on efficient proof systems.

%\subsection{Organisation}
%The remainder of the paper is structured as follows: in~\ref{sec:prelims} we present a brief background on blockchains based on efficient proof systems as well as our selfish mining model and adversarial objective. We then present our formal model and analysis of our selfish mining attack strategy in~\ref{sec:methodology}.~\ref{sec:eval} presents simulations of our attack strategy for various model parameter settings. Finally, we conclude with a discussion of future directions in~\ref{sec:conclusion}.

\section{Preliminaries}\label{sec:prelims}

\subsection{System Model}\label{sec:prelims_system}

We define a formal model for blockchain systems which we will consider throughout this work. 
The model describes how block mining proceeds and is parametrized by several parameters. 
Different parameter value instantiations then give rise to formal models for blockchains based on PoW, PoStake, or PoST protocols. We assume the miners in the blockchain protocol are either adversarial or honest. Honest miners all follow a prescribed protocol, whereas the adversarial miners work (i.e.~pool resources) together in a coalition.

\smallskip{\em System model.}
We assume that block mining proceeds in discrete time steps, as in many discrete-time models of PoW~\cite{PassSS17,GaziKR20} and PoStake~\cite{ouroboros} protocols. 
%In each time slot, at most $1$ block is mined and appended to the underlying blockchain.
For a miner that owns a fraction $p \in [0,1]$ of the total resources in the blockchain and that can mine up to $k>0$ blocks at any given time step, we define $(p,k)$-mining as follows: the probability of mining a block at the given time step is proportional to $p \cdot k$, and the maximum number of blocks which are available for mining is $k$. One can think of $k$ as some further constraints on the number of blocks a miner can mine on at any given point in time, e.g.~due to the number of VDFs they own in PoST.
Hence, $(p,1)$-mining corresponds to mining in PoW based blockchains, $(p,k)$-mining for $k<\infty$ to mining in PoST based blockchains with $k$ VDFs, and $(p,\infty)$-mining to mining in PoStake based blockchains. 

\smallskip{\em Adversarial and broadcast model.} 
Let $p \in [0,1]$ denote the fraction of resources in the chain owned by the adversarial coalition.
We assume the adversarial coalition participates in $(p,k)$-mining, and the honest miners participate in $(1-p,1)$-mining, where the only block the honest miners mine on at any time step is the most recent block on the longest public chain.
When there is a tie, i.e., two longest chains are gossiped through the network, the honest miners will mine on the chain which is gossiped to them first. 
In such situations, $\gamma \in [0,1]$ denotes the {\em switching probability}, i.e.,~the probability of honest miners switching to the adversary's chain.  

\subsection{Selfish Mining Objective}\label{sec:prelim_sm}
\paragraph{Chain quality.}
%An important property blockchain protocols should satisfy is \emph{chain quality}.
Chain quality is a measure of the number of honest blocks in any consecutive segment of the blockchain.
A blockchain protocol is said to satisfy $(\mu,\ell)$-chain quality if for any segment of the chain of length $\ell$, the fraction of honest blocks in that segment is at least $\mu$~\cite{GarayKL15}.
%The chain quality of a blockchain protocol is said to be \emph{ideal} if $\mu = 1-p$.

%\smallskip{\em Expected revenue.}
%In our work, we model the expected revenue of a coalition by simply counting the expected number of blocks belonging to the coalition that is on the main chain. For instance, in~\ref{fig:cq} the expected revenue of the adversary is $4$ in main chain on the top and $3$ in the main chain on the bottom.

\smallskip{\em Selfish mining objective.}
Let $\sigma$ denote an adversarial mining strategy. 
The selfish mining objective we analyse in our work is the \emph{expected relative revenue} of the adversary. %, which is defined as the expected ratio of revenue of the adversary over the total revenue of all parties (see~\ref{fig:cq} for an example).
Formally, let $\textsc{revenue}_{\mathcal{A}}$ and $\textsc{revenue}_{\mathcal{H}}$ denote the number of adversarial and honest blocks in the main chain respectively.
The expected relative revenue ($\RER$) of the adversary under strategy $\sigma$ is defined as
\[ \RER(\sigma) = \mathbb{E}^{\sigma}\Big[ \frac{\textsc{revenue}_{\mathcal{A}}}{\textsc{revenue}_{\mathcal{A}} + \textsc{revenue}_{\mathcal{H}}} \Big] \]
%We stress that the expectation is taken only over adversarial strategies as we assume the honest parties always play according to the fixed honest strategy as stipulated by the blockchain protocol.
Note that the chain quality of the blockchain under an adversarial mining strategy $\sigma$ is simply $1-\RER(\sigma)$, hence our selfish mining objective captures the expected change in the chain quality. %under the adversarial strategy, as desired.

\subsection{Markov Decision Processes}\label{sec:prelim_mdp}

As mentioned in Section~\ref{sec:intro}, we will reduce the problem of computing optimal selfish mining strategies for the expected relative revenue objective to solving MDPs with mean-payoff objectives. In what follows, we recall the necessary notions on MDPs and formally define the mean-payoff objectives. For a finite set $X$, we use $\mathcal{D}(X)$ to denote the set of all probability distributions over $X$.

\smallskip{\em Markov decision process.}
A {\em Markov Decision Process (MDP)}~\cite{Puterman94} is a tuple $\MDP = (\States, \Actions, \transition, s_0)$ where
\begin{itemize}[leftmargin=*]
    \item $\States$ is a finite set of {\em states}, with $s_0 \in \States$ being the {\em initial state},
    \item $\Actions$ is a finite set of {\em actions}, overloaded to specify for each state $s \in \States$ the set of {\em available actions} $\Actions(s) \subseteq \Actions$, and
    \item $\transition: \States \times \Actions \rightarrow \mathcal{D}(\States)$ is a {\em transition function}, prescribing to each $(s,a) \in \States\times\Actions$ a probability distribution over successor states.
\end{itemize}
A {\em strategy} in $\MDP$ is a recipe to choose an action given a history of states and actions, i.e.~it is a function $\sigma: (\States \times \Actions)^{\ast} \times \States \rightarrow \mathcal{D}(A)$. In general, strategy can use randomization and memory. A {\em positional strategy} uses neither randomization nor memory, i.e.~it is a function $\sigma: \States \rightarrow \Actions$. We denote by $\Sigma$ and $\Sigma^p$ the set of all strategies and all positional strategies in $\MDP$. Given a strategy $\sigma$, it induces a probability measure $\mathbb{P}^{\sigma}[\cdot]$ in $\MDP$ with an associated expectation operator $\mathbb{E}^{\sigma}[\cdot]$.
%A (positional) {\em strategy} in an MDP $\MDP$ is a function $\sigma: S \rightarrow \mathcal{D}(A)$ prescribing to each state a probability distribution over actions. Each strategy $\sigma$ in $\MDP$ induces a Markov chain over the states of $\MDP$. We use $\mathbb{E}^{\sigma}[\cdot]$ to denote the expectation operator in the Markov chain induced by the strategy $\sigma$.

\smallskip{\em Mean-payoff objective.} A {\em reward function} in $\MDP$ is a function $r: \States\times\Actions\times\States \rightarrow \mathbb{R}$ which to each state-action-state triple prescribes a real-valued reward. Given an MDP $\MDP$, a reward function $r$ and a strategy $\sigma$, we define the {\em mean-payoff} of $\sigma$ with respect to $r$ via
\[ \MP(\sigma) = \mathbb{E}^{\sigma}\Big[ \liminf_{N \rightarrow \infty}\frac{\sum_{n=1}^N r_n}{N} \Big], \]
where $r_n = r_n(s_n,a_n,s_{n+1})$ is the reward incurred at step $n$. %in the MDP for each $n \geq 0$.

The mean-payoff MDP problem is to compute the maximal mean-payoff that a strategy in the MDP can achieve. A classic result in MDP analysis is that there always exists a positional strategy $\sigma^\ast$ that achieves maximal mean-payoff in the MDP, i.e.~there is $\sigma^\ast \in \Sigma^p$ such that $\MP(\sigma^\ast) = \max_{\sigma \in \Sigma^p} \MP(\sigma) = \sup_{\sigma \in \Sigma} \MP(\sigma)$~\cite{Puterman94}. Furthermore, the optimal positional strategy and the mean-payoff that it achieves can be efficiently computed in polynomial time~\cite{Puterman94,filar2012competitive}.

\section{Selfish Mining Attack}\label{sec:methodology}

%We now present our novel selfish mining attack on blockchains based on efficient proof systems and describe our methodology for its formal analysis. We start by providing an overview of our selfish mining attack and the key ideas behind its formal analysis in Section~\ref{sec:methoverview}. We then formally model the attack as an MDP in Section~\ref{sec:methmodel}. Next, we present our formal analysis procedure for computing the optimal selfish mining strategy in the MDP together with the value of the expected relative revenue that it achieves in Section~\ref{sec:methanalysis}. We conclude this section by discussing the key features of our analysis as well as its limitations in Section~\ref{sec:methdiscussion}.

\subsection{Overview}\label{sec:methoverview}

\paragraph{Motivation.} In order to motivate our selfish mining attack, we first recall the classic selfish mining attack strategy in Bitcoin~\cite{EyalS18}. Recall, the goal of the selfish mining strategy is to mine a private chain that overtakes the public chain (see Figure~\ref{fig:basic_sm}). Selfish miners secretly fork the main chain and mine privately, adding blocks to a private, unannounced chain. 
These blocks are only revealed when the length of the private chain catches up with that of the main chain, forcing the honest miners to switch from the main to the private chain and waste computational resources. %on the now shorter and defunct chain.
While in PoW blockchains each party mines on one block, in blockchains based on efficient proof systems parties can mine on multiple blocks due to ease of generating proofs. 
%For instance, in PoStake protocols the adversary can simultaneously mine on all blocks in the chain, and in PoST protocols the adversary can simultaneously mine on up to $k$ blocks in the chain where $k$ is the number of VDFs the adversary owns. 
%Hence, in designing selfish mining attacks for blockchains based on efficient proof systems, the adversary has an advantage in that they can mine on multiple blocks. 
Our selfish mining attack exploits this observation by creating multiple private forks concurrently.

\begin{figure*}[t]
\centering
\begin{subfigure}{.45\textwidth}
  \centering
  \includegraphics[width=0.8\linewidth]{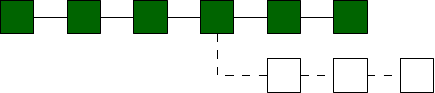}
  \caption{Classic selfish mining attack in Bitcoin.} 
  \label{fig:basic_sm}
\end{subfigure}%
\begin{subfigure}{.45\textwidth}
  \centering
  \centering
  \includegraphics[width=0.8\linewidth]{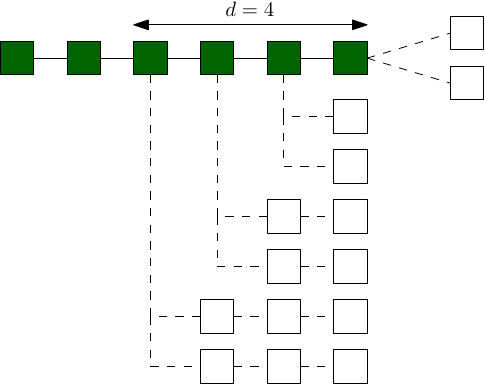}
  \caption{An illustration of our selfish mining attack with depth $d=4$ and $f=2$. Green boxes denote the main chain and white boxes denote potential blocks that can be mined under attack strategy.}
  \label{fig:attack}
\end{subfigure}
%\caption{A figure with two subfigures}
\label{fig:test}
\end{figure*}

%\begin{figure}[t]
%    \centering
%    \includegraphics[width=0.4\linewidth]{PODC'24/img/basic_sm.pdf}
%    \caption{Classic selfish mining attack in Bitcoin.} 
%    \label{fig:basic_sm}
%\end{figure}

\smallskip{\em Outline of the attack.} Our selfish mining attack proceeds with the adversary creating several private forks at different blocks in the main chain, see Figure~\ref{fig:attack} for an illustration. Rather than forking on the most recent block alone, the adversary creates up to $f$ forks on each of the last $d$ blocks on the main chain. Here, $f$ and $d$ are parameters of the attack where $f$ is the number of forks created at each public block and $d$ 
represents the depth of the adversary's attack on the chain.
One can view $d$ as the persistence parameter of the blockchain, which represents the depth at which earlier blocks are practically guaranteed to remain in the main chain~\cite{GarayKL15}. 
At each time step, the adversary can either perform a
\begin{enumerate}[leftmargin=*]
    \item {\em mining action}, i.e.~attempt to mine a new block, or
    %(and in so doing compete with the honest miners who mine on the main chain), or
    \item {\em fork reveal action}, i.e.~publicly announce one of the private forks whose length is greater than or equal to that of the main chain. 
\end{enumerate}
Deciding on the optimal order of mining and fork reveal actions that the adversary should perform at each time step towards maximizing its expected relative revenue is a highly challenging problem. This is because, at each time step, the party to mine the next block is chosen probabilistically (see our System Model in Section~\ref{sec:prelim_sm}), and the process results in a system with an extremely large number of states.
For any given precision parameter $\epsilon > 0$, our analysis will provide both an \emph{$\epsilon$-optimal strategy} among selfish mining strategies that the adversary can follow, together with the \emph{exact value} of the expected relative revenue guaranteed by this strategy. %which differs by at most $\epsilon$ from the expected relative revenue achieved by an optimal strategy in the model. 

%\begin{figure}[htb!]
%    \centering
%    \includegraphics[width=0.4\linewidth]{PODC'24/img/attack.pdf}
%    \caption{An illustration of our selfish mining attack with depth $d=4$ and $f=2$. Green boxes denote the main chain, whereas white boxes denote potential blocks that can be mined under the attack strategy. 
    %The depth of the attack is $d=4$ and the adversary creates $f=2$ private forks on each of the last $d$ public blocks.
%    } 
%    \label{fig:attack}
%\end{figure}

%{\em Intuition.} The intuition behind forking at earlier depths in the chain is that, while these private forks have a lower probability of catching up with the main chain, they also potentially result in larger revenue. Hence, there is a tradeoff between the probability of a private fork catching up and the revenue that it achieves. Since miners in blockchains based on efficient proof systems can mine on multiple blocks, our selfish mining attack allows the adversary to create forks on the $d$ latest blocks in the main chain to optimize this tradeoff. By allowing it to choose which forks to mine on and which to reveal, our analysis considers a {\em class of selfish mining strategies} and searches for the optimal strategy in the class.

\smallskip{\em Formal model of the attack.} The goal of our analysis is to find an optimal selfish mining strategy which maximizes the expected relative revenue of the adversary. Note that this is a sequential decision making problem, since the optimal strategy under the above selfish mining setting must at every decision step optimally choose whether to perform a mining or a fork release action. 
Hence, in order to analyze this problem, %and compute an $\epsilon$-optimal selfish mining strategy for a given $\epsilon > 0$,
in Section~\ref{sec:methmodel} we formally model our problem as an MDP. The state space of the MDP consists of all possible configurations of the main chain and private forks with the initial state corresponding to the time step at which the selfish mining attack is initiated. The action space of the MDP consists of the mining action as well as one fork release action for each private fork whose length is greater than or equal to that of the main chain. Finally, the probabilistic transition function of the MDP captures the probabilistic process that generates and determines ownership (honest or adversarial) of new blocks to be added to the blockchain, as well as the process of determining the new main chain whenever the adversary publishes one or more private forks of equal length.

\smallskip{\em Formal analysis of the attack.} Recall, the objective of our selfish mining attack is to maximize the expected relative revenue of the adversary. 
To do this, in Section~\ref{sec:methanalysis} we define a class of reward functions in the MDP constructed in Section~\ref{sec:methmodel}. We show that, for any $\epsilon>0$, we can compute an $\epsilon$-optimal selfish mining strategy in the MDP and the exact value of the expected relative revenue that it guarantees by solving the mean-payoff MDP with respect to a reward function belonging to the class of constructed reward functions. We solve mean-payoff MDPs by using off-the-shelf tools as mentioned in Section~\ref{sec:prelim_mdp}. Our analysis yields a {\em fully automated} method for computing $\epsilon$-optimal strategies and values of the expected relative revenue that provides {\em formal guarantees} on the correctness of its results.

\subsection{Formal Model}\label{sec:methmodel}

We now formally model our selfish mining attack as an MDP. Recall, an MDP $\MDP = (\States, \Actions, \transition, s_0)$ is an ordered tuple where $\States$ is a finite set of states, $\Actions$ is a finite set of actions overloaded to specify for each state $s \in \States$ the set of available actions $\Actions(s) \subseteq \Actions$, $\transition: \States \times \Actions \rightarrow \mathcal{D}(\States)$ is a probabilistic transition function, and $s_0 \in \States$ is the initial state. Hence, in order to formally model our attack as an MDP, we need to define each of these four objects.

\smallskip{\em Model parameters.} Our MDP model uses as a basis the System Model that we defined in Section~\ref{sec:prelim_sm}. Thus, it is parameterized by the parameters $p$ and $\gamma$, but also by three additional parameters specific the selfish mining attack itself. We use $\mathbb{N}$ to denote the set of all positive integers:
\begin{itemize}[leftmargin=*]
\item {\em Relative resource of the adversary.} $p \in [0,1]$ denotes the fraction of resources in the blockchain owned by the adversary.
\item {\em Switching probability.} $\gamma \in [0,1]$ denotes the probability of an honest miner switching to a newly revealed adversarial chain that is of the same length as the main chain. 
\item {\em Attack depth.} $d \in \mathbb{N}$ denotes the depth of the adversary's attack on the chain, i.e.~the number of last blocks on the main chain on which the adversary mines private forks.
\item {\em Forking number.} $f \in \mathbb{N}$ denotes the number of private forks that the adversary creates at each of the last $d$ public blocks.
\item {\em Maximal fork length.} $ l \in \mathbb{N}$ denotes the maximal private fork length. Introducing this parameter ensures that our MDP model consists of finitely many states, which is necessary since existing mean-payoff MDP solvers are applicable to finite state MDPs~\cite{storm,prism}. We discuss the implications of this in Section~\ref{sec:methdiscussion}.
% \mk{Djodje: gamma is the power of adversary in the network. and this power implies that how fast adversary can release a new chain for race of equal lengths. There are two ways of looking at it: 1. the one i mentioned, so gamma percent of honest nodes will get adversary's block as new main chain. this follows current protocols where a miner will stick to a chain if another with the same length is revealed. 2. change of the protocol, so that honest miners will change to a new chain with the same size with probability gamma}\dz{This is already specified in prelims, here is just a reminder that $\gamma$ is also a parameter of the model. I added see Section xxx to add a pointer later.}
\end{itemize}

{\em MDP definition.} We define the MDP $\MDP = (\States, \Actions, \transition, s_0)$ as follows:
\begin{itemize}[leftmargin=*]
    \item {\em States.} The state space is defined via
    \begin{equation*}
    \begin{split}
        S = \Big\{ (\Config, \Own, \type)\, \Big|\, &\Config \in \{0,\dots,l\}^{d \times f}, \Own \in \{
    \textsc{honest},\textsc{adversary}\}^{d-1}, \\
    &\type \in \{\textsc{mining},\textsc{honest},\textsc{adversary}\}\Big\},
    \end{split}
    \end{equation*}
    i.e.~each state is defined as a triple $(\Config, \Own, \type)$ where $\Config$ defines the current blockchain configuration (i.e.~topology) up to depth $d$, $\Own$ specifies who owns each block in the main chain up to depth $d$ (honest miners or the adversary), and $\type$ specifies whether the parties are still mining or some party has mined a new block and is supposed to add it to the blockchain. In particular:
    % \mk{Djordje: I think we need to add intuition to type. type is either adversary/honest if the last mined block was adversary/honest. so that when the last one was mined by honest, and adversary have a chain with same length, adversary can do a race.} \dz{This intuition is in the third bulletpoint below}
    \begin{itemize}
        \item For each $1 \leq i \leq d$ and $1 \leq j \leq f$, we use $\Config[i,j]$ to denote the length of the $j$-th private fork mined by the adversary at the depth $i$ block on the main chain. Each private fork has length at most $l$, so we impose that each $C[i,j] \in \{0,\dots,l\}$.
        \item For each $1 \leq i < d$, we use $\Own[i]$ to denote who owns the block at depth $i$ in the main chain. In particular, $\Own[i] = \textsc{honest}$ if the block is owned by honest miners and $\Own[i] = \textsc{adversary}$ if the block is owned by the adversary.
        \item Finally, $\type$ specifies whether a new block to be added to the blockchain is still being mined in which case we set $\type = \textsc{mining}$, or if some party has generated the proof and gets to add a new block in which case we set $\type = \textsc{honest}$ and $\type = \textsc{adversary}$, respectively.
    \end{itemize}

    \item {\em Initial State.} Initial state $s_0 = (\Config_0, \Own_0, \type_0)$ corresponds to the time step at which the selfish mining attack is initiated. Hence, we set $\Config_0 = [0]^{d \times f}$ since the length of each private fork is initially $0$, $\Own_0 = [\textsc{honest}]^d$ and $\type_0 = \textsc{mining}$.

    \item {\em Actions.} The action space is defined via
    \[ A = \{\textsc{mine}\} \cup \Big\{ \textsc{release}_{i,j,k} \,\Big|\, 1 \leq i\leq d, 1 \leq j \leq f, 1 \leq k \leq l \Big\}. \]
    Intuitively, $\textsc{mine}$ prescribes that the adversary should not release any private forks and should simply continue mining new blocks. On the other hand, $\textsc{release}_{i,j,k}$ prescribes that the adversary should publish the first $k$ blocks of the $j$-th private fork mined on the block at depth $i$ in the main chain.
    The set of available actions at each MDP state $s = (\Config,\Own,\type)$ is defined as follows:
    \begin{itemize}
        \item If $\type = \textsc{mining}$, then $\Actions(s) = \{\textsc{mine}\}$.
        \item If $\type \in \{\textsc{honest},\textsc{adversary}\}$, then \\ $\Actions(s) = \{\textsc{mine}\} \cup \{ \textsc{release}_{i,j,k} \,|\, k \leq \Config[i,j] \}$,
        where the condition $k \leq \Config[i,j]$ simply ensures that the length of the published part of the private fork cannot exceed the total length of the private fork.
    \end{itemize}
    
    \item {\em Transition Function.} Finally, we define the transition function $P: \States \times \Actions \rightarrow \mathcal{D}(S)$. Let $s = (\Config,\Own,\type)$ and $a \in \Actions(s)$ be an action available in $s$:
    \begin{itemize}
        \item If $\type = \textsc{mining}$, then we must have $a = \textsc{mine}$ as the only available action. The next state in the MDP is chosen probabilistically, based on who mines the next block and where. Recall, the honest miners own $1-p$ fraction of resources in the blockchain and are mining on the main chain only, whereas the adversary owns $p$ fraction of resources but mines on at most $d \cdot f$ private forks. Note that the adversary will concurrently mine on top of each private fork starting at public blocks up to depth $d$ in the main chain, as well as on top of each public block up to depth $d$ at which not all $f$ private works were initiated. %as well as at all of the private forks. %and will mine on the last block of each chain if exists. 
        Hence:
        \begin{itemize}
            \item Denote by $\sigma$ the total number of blocks that the adversary is mining on. For each $1 \leq i \leq d$ and $1 \leq j \leq f$, if the $j$-th private fork on the public block at depth $i$ in the main chain is not empty, adversary mines a new block on it and the MDP moves to $s_{\textsc{adv}}^{i,j} = (\Config', \Own,\textsc{adversary})$ with probability
            \[ P(s,a)(s_{\textsc{adv}}^{i,j}) = \frac{p}{1 - p + p \cdot \sigma}. \]
            Here, $\Config'$ coincides with $\Config$ on all entries except 
            $\Config'[i, j] = \min\{\Config[i, j] + 1,l\}$, where 
            the new block is found on top of the $j$-th private 
            fork on the block at depth $i$. 
            The minimum ensures that new block is not added if the length of the fork would exceed $l$.
            
            Moreover, if at least one private fork on the public block at depth $i$ in the main chain is empty, adversary mines a block on it to start a new private fork and the MDP moves to $s_{\textsc{adv}}^{i,j^{\ast}} = (\Config'', \Own,\textsc{adversary})$ with probability $\frac{p}{1 - p + p \cdot \sigma}$, where $j^{\ast}$ is the smallest index of a private fork that is currently empty at the depth $i$ public block and $\Config''$ coincides with $\Config$ on all entries except $\Config''[i, j^{\ast}] = 1$. 
            
            %can mine on it. On the other hand, if there still exists an empty fork on the block at depth $i$ in the main chain, the adversary will mine on top that block to start a new private fork. Hence, after the 
            %adversary adds a new block, the MDP moves to $s_{\textsc{adv}}^{i,j} = (\Config', \Own,\textsc{adversary})$ with probability
            %\[ P(s,a)(s_{\textsc{adv}}^{i,j}) = \frac{p}{1 - p + p \cdot \sigma}. \]
            %Here, $\Config'$ coincides with $\Config$ on all entries except  $\Config'[i, j] = \min\{\Config[i, j] + 1,l\}$, where the new block is found on top of the $j$-th private fork on the block at depth $i$. Also, the minimum ensures that a new block is not added if the length of the fork would exceed $l$.
            \item Honest miners add a new block to the main chain and the MDP moves to state $s_{\textsc{honest}} = ([0]^{f}\cdot\Config[1:d-1],[\textsc{honest}]\cdot\Own[1:d-2],\textsc{honest})$ with probability
            \[ P(s,a)(s_{\textsc{honest}}) = \frac{1-p}{1 - p + p \cdot \sigma}. \] 
            where $\sigma$ is as defined above. We use $[0]^{f}\cdot\Config[1:d-1]$ to denote the $d \times f$ matrix whose first column consists of zeros (representing empty private forks on the newly added block) and the remaining columns are the first $d-1$ columns of $\Config$. Similarly, $[\textsc{honest}]\cdot\Own[1:d-2]$ denotes the vector whose first component is $\textsc{honest}$ corresponding to the new block, followed by the first $d-2$ components of $\Own$.
            % \item Adversary adds a new block to the main chain and the MDP moves to $s_{\textsc{adv}} = ([0]^{k}\cdot\Config[1:d-1],[\textsc{adversary}]\cdot\Own[1:d-1],\textsc{adversary})$ with probability
            % \[ P(s,a)(s') = \frac{p}{1 + p \cdot d \cdot f}. \] 
            % \mk{I think we do not have this. Amirali please check. Adversary just adds to its own blocks and THEN releases if it wants to.} \dz{I see, the adversary only creates private forks on the leading block. Leaving the comment for Amirali to confirm.}\ama{Adversary does not directly add a block to the main chain. it discovers a block on top of a private chain, and based on the length of that chain chooses to release or not.}
            % \item For each $1 \leq i \leq d$ and $1 \leq j \leq k$, adversary adds a new block to the $j$-th private fork mined on the block at depth $i$ in the main chain and the MDP moves to $s_{\textsc{adv}}^{i,j} = ([0]^{k}\cdot\Config[1:d-1],[\textsc{adversary}]\cdot\Own[1:d-1],\textsc{adversary})$ with probability
            % \[ P(s,a)(s_{\textsc{adv}}^{i,j}) = \frac{p}{1 + p \cdot d \cdot f}. \] 
        \end{itemize}
        
        \item If $\type \neq \textsc{mining}$ but $a = \textsc{mine}$, then the adversary continues mining new blocks and the MDP moves to $s_{\textsc{mine}} = (\Config,\Own,\textsc{mining})$ with probability
        \[ P(s,a)(s_{\textsc{mine}}) = 1. \]

        \item If $\type = \textsc{honest}$ and $a = \textsc{release}_{i,j,k}$, then the adversary publishes the first $k$ blocks of the $j$-th private fork mined on the block at depth $i$ in the main chain. Hence, the next MDP state is determined by the chain which gets accepted as the main chain by honest miners. If the newly published fork is strictly longer than the main chain, the honest miners accept it as the new main chain with probability $1$. Otherwise, if the main chain and the published fork have the same length, then a ``race'' between the chains happens and the published fork becomes the main chain with switching probability $\gamma$. Hence:
        \begin{itemize}
            \item If the published fork is longer than the main chain, the honest miners switch to the fork and the MDP moves to
            $s_{\textsc{accept}} = (\Config'\cdot[0]^{(min(d, k)-1)\times f}\cdot\Config'',[\textsc{adversary}]^{min(d-1, k)}\cdot\Own[k+1:d-1],\textsc{mining})$ with probability
            \[ P(s,a)(s_{\textsc{accept}}) = 1. \]
            % $s_{\textsc{accept}} = ([0]^{k\times f}\cdot\Config[k+1:d],[\textsc{adversary}]^k\cdot\Own[k+1:d],\textsc{mining})$ with probability
            % \[ P(s,a)(s_{\textsc{accept}}) = 1. \]
            Here, $\Config'$ is a $(1\times f)$-dimensional vector with $0$ entries except that $\Config'[1, 1] = \Config[i, j] - k$. So, the adversary keeps the part of the published private fork consisting of the last $\Config[i, j] - k$ blocks that have not been revealed.
            On the other hand, $\Config''$ coincides with $\Config[k+1:d]$ on all entries except that $\Config''[i, j] = 0$. Hence, the adversary keeps and continues mining on all usable private forks, with one new private fork of initial length $0$ being created to replace the published fork. Note, $\Own[k+1:d-1]$ and $\Config[k+1:d]$ are empty if $k + 1 > d-1$ and $k + 1 > d$, respectively.
            
            \item Otherwise, if the published fork and the main chain are of the same length, a race happens and honest miners switch to the published fork with probability~$\gamma$. and the new block will be mined by adversary with probability $p$.
            So, the MDP moves to $s_{\textsc{accept}} = (\Config'\cdot[0]^{(min(d, k)-1)\times f}\cdot\Config'',[\textsc{adversary}]^{min(d-1, k)}\cdot\Own[k+1:d-1],\textsc{mining})$ with probability
            \[ P(s,a)(s_{\textsc{accept}}) = \gamma, \]
            where $\Config'$ and $\Config''$ are as above, and moves to $s_{\textsc{reject}} = (\Config, \Own,\textsc{mining})$ with probability
            \[ P(s,a)(s_{\textsc{reject}}) = 1 - \gamma. \]
        \end{itemize}
        \item Finally, if $\type = \textsc{adversary}$, $a = \textsc{release}_{i,j,k}$ and the published private fork is longer than the main chain, the MDP moves to \\
        $s_{\textsc{accept}} = (\Config'\cdot[0]^{(min(d, k)-1)\times f}\cdot\Config'',[\textsc{adversary}]^{min(d-1, k)}\cdot\Own[k+1:d-1],\textsc{mining})$ 
        with probability
            \[ P(s,a)(s_{\textsc{accept}}) = 1. \]
        where $\Config'$ and $\Config''$ are defined as the previous case.
        Note that if the published fork and the main chain are of the same length, the race cannot happen as the last block was mined by the adversary and enough time has passed for all the miners to receive the public chain.
        %\mk{A POSSIBLE BUG may exist here. There are two ways of interpreting gamma. 1. honest follows the protocol where it ignores the chains with same length (current real-world protocols). In this case gamma is how fast adversary can propagate its chain. 2. with what probability (gamma) adversary changes to a new chain with same length. However, In case 1, adversary cannot perform a race condition if the last mined block is for adversary, in case 2 it is possible to do so. The difference is not much and just one transition should change in this case. I think (Amirali confirm/reject please) what we have is the first one. So we should decide between these two gammas. I can see benefits of both and cannot decide which to take (or even maybe both). Last note is that in the MDP POW paper, they described every thing as the first case but they just mentioned about the second one that they also implemented and got some nice results.}
    \end{itemize}
\end{itemize}

\subsection{Formal Analysis}\label{sec:methanalysis}

\paragraph{Goal of the analysis.} We now show how to compute the optimal expected relative revenue together with an adversarial strategy achieving it in the MDP $\MDP = (\States, \Actions, \transition, s_0)$, up to an arbitrary precision parameter $\epsilon>0$. Formally, for each strategy $\sigma$ in $\MDP$, let
\[ \RER(\sigma) = \mathbb{E}^{\sigma}\Big[ \frac{\textsc{revenue}_{\mathcal{A}}}{\textsc{revenue}_{\mathcal{A}} + \textsc{revenue}_{\mathcal{H}}} \Big] \]
be the expected relative revenue under adversarial strategy $\sigma$, i.e.~the relative ratio of the number of blocks accepted on the main chain belonging to the adversary and to honest miners. Moreover, let
\[ \RER^\ast = \sup_{\text{strategy } \sigma \text{ in } \MDP} \RER(\sigma) \]
be the optimal expected relative revenue that an adversarial strategy can attain. Given a precision parameter $\epsilon>0$, our goal is to compute
\begin{enumerate}[leftmargin=*]
    \item a lower bound $\overline{\RER} \in [\RER^\ast-\epsilon,\RER^\ast]$, and
    \item a strategy $\overline{\sigma}$ in $\MDP$ such that $\RER(\overline{\sigma}) \in [\RER^\ast-\epsilon,\RER^\ast]$.
\end{enumerate}
We do this by defining a class of reward functions in the MDP  $\MDP$ and showing that, for any value of the precision parameter $\epsilon>0$, we can compute the above by solving the mean-payoff MDP with respect to a reward function belonging to this class. Our analysis draws insight from that of~\cite{SapirshteinSZ16}, which considered selfish mining in PoW blockchains and also reduced reasoning about expected relative revenue to solving mean-payoff MDPs with respect to suitably defined reward functions. However, in contrast to~\cite{SapirshteinSZ16}, we consider selfish mining in efficient proof systems in which the adversary can mine on multiple blocks, meaning that our design of reward functions and the analysis require additional care.

\smallskip{\em Reward function definition.} The key challenge in designing the reward function is that the main chain and the blocks on it may change whenever the adversary publishes a private fork. Hence, we design the reward function to incur positive (resp.~negative) reward whenever a block owned by the adversary (resp.~honest miners) is accepted at the depth strictly greater than $d$ in the main chain. Since the adversary only mines and publishes private forks mined on blocks up to depth $d$ in the main chain, this means that blocks beyond depth $d$ are {\em guaranteed} to remain on the main chain.

Formally, for each $\beta \in [0,1]$, we define $r_\beta: \States \times \Actions \times \States \rightarrow \mathbb{R}$ to be a reward function in $\MDP$ which to each state-action-state triple $(s,a,s')$ assigns the reward:
\begin{itemize}[leftmargin=*]
    \item $1-\beta$, for each block belonging to {\em the adversary} accepted at depth greater than $d$ as a result of performing the action;
    \item $-\beta$, for each block belonging to {\em honest miners} accepted at depth greater than $d$ as a result of performing the action.
\end{itemize}
%\begin{itemize}
%    \item $r_\beta(s,a,s') = 1-\beta$, if the action results in honest miners adding a new block to the main chain with the block at depth $d$ belonging to honest miners;
%    \item $r_\beta(s,a,s') = -\beta$, if the action results in honest miners adding a new block to the main chain with the block at depth $d$ belonging to the adversary;
%    \item $r_\beta(s,a,s') = (k - k') - k \cdot \beta$, if the action results in the adversary publishing a private fork that is $k$ blocks ahead of the main chain in length, and where $0 \leq k' \leq k$ is the number of the blocks at depth $d-k'+1,\dots,d$ owned by honest miners.
%    \item $0$, otherwise.
%\end{itemize}
This definition can be formalized by following the same case by case analysis as in the definition of the transition function in Section~\ref{sec:methmodel}. For the interest of space, we omit the formal definition. For each $\beta$ and strategy $\sigma$ in the MDP, let $\MP_{\beta}(\sigma)$ be the mean-payoff under $\sigma$ with respect to the reward function $r_\beta$, and let $\MP_{\beta}^\ast = \sup_{\sigma}\MP_{\beta}(\sigma)$. Note that, since for each state-action-state triple the value of the reward $r_\beta(s,a,s')$ is monotonically decreasing in $\beta \in [0,1]$, we have that $\MP_\beta^\ast$ is also monotonically decreasing in $\beta \in [0,1]$.

\smallskip{\em Formal analysis.} Our formal analysis is based on the following theorem. For clarity of exposition, we defer the proof of the theorem to Appendix~\ref{app:proof_main} . For every $\epsilon > 0$, the theorem shows how to relate the optimal expected relative revenue in the MDP and $\epsilon$-optimal strategies to the optimal mean-payoff and $\epsilon$-optimal strategies under the reward function $r_\beta$ for a suitably chosen value of $\beta$.

\begin{restatable}[]{theorem}{main}
\label{thm:sound}
We have $\MP^\ast_{\beta^\ast} = 0$  if and only if $\beta^\ast = \RER^\ast$. Moreover, if $\epsilon > 0$ and $\beta \in [\RER^\ast-\epsilon,\RER^\ast]$, then for any strategy $\sigma$ such that $\MP_{\beta}(\sigma) = \MP^\ast_{\beta}$ we have $\RER(\sigma) \in [\RER^\ast-\epsilon,\RER^\ast]$.
\end{restatable}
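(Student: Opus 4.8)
The plan is to exhibit, for every strategy $\sigma$, an exact relationship between $\MP_\beta(\sigma)$ and $\RER(\sigma)$ mediated by the long-run \emph{finalization rate} of the chain, and then read off both claims by elementary algebra. Call a block \emph{finalized} once it reaches depth strictly greater than $d$ on the main chain; since the adversary only ever mines and releases forks rooted at depth $\le d$, finalized blocks are permanent. By the definition of $r_\beta$, the reward collected over the first $N$ steps of a run equals $A_N-\beta(A_N+H_N)$, where $A_N$ and $H_N$ are the numbers of adversarial and honest blocks finalized so far. Dividing by $N$ and passing to the limit, one obtains, for a positional $\sigma$ and by the ergodic theorem applied to each recurrent class of the induced Markov chain,
\[
\MP_\beta(\sigma)=a_\sigma-\beta\, t_\sigma = t_\sigma\big(\RER(\sigma)-\beta\big),
\]
where $t_\sigma>0$ is the long-run rate at which blocks are finalized, $a_\sigma$ the rate at which \emph{adversarial} blocks are finalized, and $\RER(\sigma)=a_\sigma/t_\sigma$. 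The two points I expect to require the most care are: (i) justifying this identity at the right level of generality (this is where the recurrent-class decomposition, the ergodic theorem, and the precise reading of the $\RER$ objective as a ratio of long-run rates enter); and (ii) showing $t_\sigma$ is bounded away from $0$ and from $\infty$ uniformly in $\sigma$, the lower bound being the crucial one. For the lower bound I would use that the type component of the state strictly alternates between mining and non-mining, so along every run at least (asymptotically) half the steps start in a mining-type state, and from such a state an honest block is appended -- permanently lengthening the main chain and thus finalizing exactly one more block -- with probability at least $\tfrac{1-p}{1-p+p\cdot d\cdot f}>0$ regardless of the action taken; finiteness of the state space then yields uniform bounds $0<t_{\min}\le t_\sigma\le t_{\max}<\infty$.

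Granting the identity and these bounds, the first claim follows. Since $\RER(\sigma)\le\RER^\ast$ for every $\sigma$ and $t_\sigma>0$, we get $\MP_\beta(\sigma)=t_\sigma(\RER(\sigma)-\beta)\le t_\sigma(\RER^\ast-\beta)$ for all $\sigma$; hence if $\beta>\RER^\ast$ then $\MP_\beta(\sigma)\le t_{\min}(\RER^\ast-\beta)<0$ for all $\sigma$, so $\MP^\ast_\beta<0$, while if $\beta=\RER^\ast$ then $\MP_\beta(\sigma)\le 0$ for all $\sigma$ and $\MP^\ast_{\RER^\ast}=\sup_\sigma\MP_{\RER^\ast}(\sigma)=0$ because the value $0$ is approached along strategies with $\RER(\sigma)\to\RER^\ast$. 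Conversely, for $\beta<\RER^\ast$, taking $\sigma$ with $\RER(\sigma)>\beta$ gives $\MP_\beta(\sigma)=t_\sigma(\RER(\sigma)-\beta)\ge t_{\min}(\RER(\sigma)-\beta)>0$, so $\MP^\ast_\beta>0$. Thus $\beta\mapsto\MP^\ast_\beta$ is positive below $\RER^\ast$, zero at $\RER^\ast$, and negative above it; in particular $\MP^\ast_{\beta^\ast}=0$ if and only if $\beta^\ast=\RER^\ast$.

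For the second claim, fix $\epsilon>0$ and $\beta\in[\RER^\ast-\epsilon,\RER^\ast]$; by the first claim $\MP^\ast_\beta\ge 0$. Let $\sigma$ be any strategy with $\MP_\beta(\sigma)=\MP^\ast_\beta$. Then $t_\sigma\big(\RER(\sigma)-\beta\big)=\MP_\beta(\sigma)=\MP^\ast_\beta\ge 0$, and since $t_\sigma>0$ this forces $\RER(\sigma)\ge\beta\ge\RER^\ast-\epsilon$; together with the trivial bound $\RER(\sigma)\le\RER^\ast$ we conclude $\RER(\sigma)\in[\RER^\ast-\epsilon,\RER^\ast]$. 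The only loose end is a non-positional $\sigma$ attaining $\MP^\ast_\beta$: here one first uses that for every strategy the realized long-run average of $r_\beta$ is almost surely at most $\MP^\ast_\beta$, so an optimal $\sigma$ attains $\liminf_N\frac1N\sum r_n=\MP^\ast_\beta$ almost surely, and then argues as above on each recurrent class entered under $\sigma$. The genuinely delicate work is thus confined to ingredients (i) and (ii) above; the rest is bookkeeping.
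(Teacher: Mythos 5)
Your proposal is correct and follows essentially the same route as the paper's proof: you rewrite $\MP_\beta(\sigma)$ as the long-run finalization-rate identity $t_\sigma(\RER(\sigma)-\beta)$ via the ergodic theorem, bound the total finalization rate below by $\frac{1-p}{1-p+p\cdot d\cdot f}>0$ (the paper's $\delta$), and then conclude both claims by dividing and a sign/monotonicity argument, exactly as in Appendix~\ref{app:proof_main}. The only differences are presentational (recurrent-class decomposition and an explicit sign trichotomy instead of the paper's ergodicity claim and monotonicity of $\MP^\ast_\beta$), so no further changes are needed.
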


%\begin{theorem}\label{thm:sound}
%    We have $\MP^\ast_{\beta^\ast} = 0$  if and only if $\beta^\ast = \RER^\ast$. Moreover, if $\epsilon > 0$ and $\beta \in [\RER^\ast-\epsilon,\RER^\ast]$, then for any strategy $\sigma$ such that $\MP_{\beta}(\sigma) = \MP^\ast_{\beta}$ we also have that $\RER(\sigma) \in [\RER^\ast-\epsilon,\RER^\ast]$.
%\end{theorem}

\begin{algorithm}[t]
\caption{Formal analysis procedure}
\label{alg:algorithm}
\begin{algorithmic}
\STATE \textbf{Input} Precision parameter $\epsilon > 0$, MDP parameters $p,d,f,l,\gamma$
\STATE $\MDP \leftarrow$ MDP constructed as in Section~\ref{sec:methmodel}
\STATE $\beta_{\text{low}}, \beta_{\text{up}} \leftarrow 0, 1$
\WHILE{$\beta_{\text{up}} - \beta_{\text{low}} \geq \epsilon$}
\STATE $\beta \leftarrow (\beta_{\text{low}} + \beta_{\text{up}})/2$
\STATE $\MP_\beta^\ast, \sigma_\beta \leftarrow$ solve mean-payoff MDP $\MDP$ with reward $r_\beta$
\IF{$\MP_\beta^\ast < 0$}
\STATE $\beta_{\text{up}} \leftarrow \beta$
\ELSE
\STATE $\beta_{\text{low}} \leftarrow \beta$
\ENDIF
\ENDWHILE
\STATE $\overline{\RER} \leftarrow \beta_{\text{low}}$
\STATE $\overline{\sigma} \leftarrow$ solve mean-payoff MDP $\MDP$ with reward $r_{\beta_{\text{low}}}$
\STATE \textbf{Return} Expected relative revenue $\overline{\RER}$, strategy $\overline{\sigma}$
\end{algorithmic}
\end{algorithm}

Following Theorem~\ref{thm:sound}, we compute $\overline{\RER}$ and $\overline{\sigma}$ for a given precision $\epsilon > 0$ as follows. Algorithm~\ref{alg:algorithm} shows the pseudocode of our formal analysis. The algorithm performs binary search in $\beta \in [0,1]$ in order to find a value of $\beta$ for which $\beta \in [\RER^\ast-\epsilon,\RER^\ast]$. In each iteration of the binary search, the algorithm uses an off-the-shelf mean-payoff MDP solver to compute the optimal mean-payoff $\MP^\ast_{\beta}$ and a strategy $\sigma_\beta$ attaining it. Upon binary search termination, we have $\RER^\ast \in [\beta_{\text{low}}, \beta_{\text{up}}]$, since $\MP^\ast_{\beta}$ is a monotonically decreasing function in $\beta$ and since $\MP^\ast_{\beta^\ast} = 0$ if and only if $\beta^\ast = \RER^\ast$ by the first part of Theorem~\ref{thm:sound}. Hence, as the binary search terminates when $\beta_{\text{up}} - \beta_{\text{low}} < \epsilon$, we conclude that $\overline{\RER} = \beta_{\text{low}} \in  [\RER^\ast-\epsilon,\RER^\ast]$. Moreover, since $\overline{\sigma}$ is optimal for the mean-payoff objective with reward $r_{\beta_{\text{low}}}$ and since $\beta_{\text{low}} \in  [\RER^\ast-\epsilon,\RER^\ast]$, by the second part of Theorem~\ref{thm:sound} we have $\RER(\overline{\sigma}) \in [\RER^\ast-\epsilon,\RER^\ast]$.

\begin{corollary}[Correctness of the analysis]
    Let $\epsilon>0$. Suppose that Algorithm~\ref{alg:algorithm} returns a value $\overline{\RER}$ and a strategy $\overline{\sigma}$ in $\MDP$. Then, we have $\overline{\RER}, \RER(\overline{\sigma}) \in [\RER^\ast-\epsilon,\RER^\ast]$.
\end{corollary}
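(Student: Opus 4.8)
The plan is to treat Theorem~\ref{thm:sound} as a black box and show that Algorithm~\ref{alg:algorithm} maintains the loop invariant $\RER^\ast \in [\beta_{\text{low}}, \beta_{\text{up}}]$. Write $\beta^\ast := \RER^\ast$. Recall from the discussion preceding Theorem~\ref{thm:sound} that $\beta \mapsto \MP^\ast_\beta$ is non-increasing on $[0,1]$, and that Theorem~\ref{thm:sound} asserts $\MP^\ast_\beta = 0$ if and only if $\beta = \beta^\ast$. Together these give the sign characterization
\[ \beta < \beta^\ast \Rightarrow \MP^\ast_\beta > 0, \qquad \beta = \beta^\ast \Rightarrow \MP^\ast_\beta = 0, \qquad \beta > \beta^\ast \Rightarrow \MP^\ast_\beta < 0. \]
Indeed, for $\beta < \beta^\ast$ monotonicity gives $\MP^\ast_\beta \ge \MP^\ast_{\beta^\ast} = 0$, while $\MP^\ast_\beta \neq 0$ since $\beta \neq \beta^\ast$; the case $\beta > \beta^\ast$ is symmetric.

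I would then establish the invariant by induction on the number of completed iterations. At initialization $[\beta_{\text{low}}, \beta_{\text{up}}] = [0,1]$, and $\beta^\ast \in [0,1]$ because $\RER(\sigma)$ is always the expectation of a ratio that takes values in $[0,1]$. For the inductive step, one iteration picks $\beta = (\beta_{\text{low}} + \beta_{\text{up}})/2$, which lies strictly between $\beta_{\text{low}}$ and $\beta_{\text{up}}$, and invokes the mean-payoff MDP solver, which by Section~\ref{sec:prelim_mdp} returns the exact optimal value $\MP^\ast_\beta$ (and an optimal positional strategy). If $\MP^\ast_\beta < 0$, the sign characterization forces $\beta > \beta^\ast$, so the update $\beta_{\text{up}} \leftarrow \beta$ keeps $\beta^\ast < \beta_{\text{up}}$ with $\beta_{\text{low}}$ unchanged; if $\MP^\ast_\beta \ge 0$, the sign characterization forces $\beta \le \beta^\ast$, so the update $\beta_{\text{low}} \leftarrow \beta$ keeps $\beta_{\text{low}} \le \beta^\ast$ with $\beta_{\text{up}}$ unchanged. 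Either way the invariant is preserved.

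Finally I would collect the two conclusions. The quantity $\beta_{\text{up}} - \beta_{\text{low}}$ is exactly halved in each iteration and starts at $1$, so after finitely many iterations it drops below $\epsilon$ and the loop exits; hence the algorithm does return some $\overline{\RER}$ and $\overline{\sigma}$. At termination the invariant gives $\beta^\ast \in [\beta_{\text{low}}, \beta_{\text{up}}]$ with $\beta_{\text{up}} - \beta_{\text{low}} < \epsilon$, so $\beta_{\text{low}} \in (\beta^\ast - \epsilon, \beta^\ast] \subseteq [\RER^\ast - \epsilon, \RER^\ast]$, and since $\overline{\RER} = \beta_{\text{low}}$ this gives the first part of the claim. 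For the strategy, $\overline{\sigma}$ is returned as a mean-payoff-optimal strategy for the reward $r_{\beta_{\text{low}}}$, i.e.\ $\MP_{\beta_{\text{low}}}(\overline{\sigma}) = \MP^\ast_{\beta_{\text{low}}}$; since $\beta_{\text{low}} \in [\RER^\ast - \epsilon, \RER^\ast]$, the second part of Theorem~\ref{thm:sound} immediately yields $\RER(\overline{\sigma}) \in [\RER^\ast - \epsilon, \RER^\ast]$.

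I do not expect the corollary itself to present any real obstacle: given Theorem~\ref{thm:sound} it is a textbook binary-search-invariant argument, and the only things to be careful about are that $\RER^\ast$ genuinely lies in $[0,1]$ and that the MDP solver returns the \emph{exact} optimal value and a mean-payoff-optimal strategy. The substance lies in Theorem~\ref{thm:sound}, and were I also to prove that I would first make the reward bookkeeping precise --- along any run the accumulated $r_\beta$-reward after $N$ steps equals $A_N - \beta(A_N + H_N)$, where $A_N, H_N$ count the adversarial and honest blocks buried past the attack depth $d$, and these are non-decreasing because a block below depth $d$ can never be orphaned by a fork rooted at depth $\le d$ (this is where the careful case analysis over the reorg branches of the transition function is needed, and where the multi-fork setting departs from the single-chain analysis of~\cite{SapirshteinSZ16}). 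I would then bound the per-step settlement rate $(A_N + H_N)/N$ between two positive constants depending only on $p, d, f$ (every other step is a \textsc{mining} step in which the honest miners extend the main chain and bury one block with probability bounded away from $0$), deduce for stationary strategies via the ergodic theorem that $\MP_\beta(\sigma) = \overline{b}_\sigma(\RER(\sigma) - \beta)$ with $\overline{b}_\sigma > 0$, and read off both parts of Theorem~\ref{thm:sound} from the sign of this product, using that stationary strategies suffice for both $\MP^\ast_\beta$ and $\RER^\ast$.
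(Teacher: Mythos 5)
Your proof is correct and follows essentially the same route as the paper: a binary-search invariant $\RER^\ast \in [\beta_{\text{low}},\beta_{\text{up}}]$ maintained via the monotonicity of $\MP^\ast_\beta$ together with the first part of Theorem~\ref{thm:sound}, and then the second part of Theorem~\ref{thm:sound} applied to the mean-payoff-optimal strategy for $r_{\beta_{\text{low}}}$. Your version merely spells out the sign characterization, the induction, and termination slightly more explicitly than the paper does.
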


\begin{corollary}[Formal lower bound]
    Suppose that Algorithm~\ref{alg:algorithm} returns a value $\overline{\RER}$ and a strategy $\overline{\sigma}$. Then, there exists a selfish mining attack in the blockchain that achieves the expected relative revenue of at least $\overline{\RER}$.
\end{corollary}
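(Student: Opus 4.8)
The plan is to follow~\cite{SapirshteinSZ16} and relate the expectation-of-ratio objective $\RER$ to the per-step mean-payoff $\MP_\beta$ by unrolling the reward. Write $a_N$ (resp.\ $h_N$) for the number of adversarial (resp.\ honest) blocks accepted at depth greater than $d$ on the main chain within the first $N$ steps; these are exactly the blocks $r_\beta$ rewards, and since the adversary only forks on the last $d$ blocks they are never reverted, so $a_N,h_N$ are nondecreasing and $a_N+h_N\to\infty$. By construction $\sum_{n=1}^N r_{\beta,n}=(1-\beta)a_N-\beta h_N=(a_N+h_N)(g_N-\beta)$ with $g_N:=a_N/(a_N+h_N)$, and $\RER(\sigma)=\mathbb{E}^\sigma[\liminf_N g_N]$. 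So the theorem reduces to controlling the factor $(a_N+h_N)/N$, i.e.\ the long-run rate at which blocks become permanent.

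The first thing I would prove is uniform bounds $0<c\le\liminf_N(a_N+h_N)/N$ and $\limsup_N(a_N+h_N)/N\le\Lambda<\infty$ almost surely, with $c,\Lambda$ depending only on $p,\gamma,d,f,l$. The upper bound is easy (one transition buries at most $l$ fresh blocks). For the lower bound: the component $\type$ strictly alternates between \textsc{mining} and non-mining, so at least every second step is a genuine mining step, at which an honest block is appended to the main chain --- burying one more block at depth $d$ --- with probability at least $\frac{1-p}{1-p+p\cdot df}>0$; a strong-law argument then gives $c>0$. Together with the identity above, this yields a path-wise correspondence: with $\rho_\omega:=\liminf_N g_N$, the quantity $\liminf_N\frac1N\sum_{n\le N}r_{\beta,n}$ has the same sign as $\rho_\omega-\beta$ whenever that is nonzero, equals $0$ when $\rho_\omega=\beta$, and in all cases $c\,|\rho_\omega-\beta|\le|\liminf_N\frac1N\sum_{n\le N}r_{\beta,n}|\le\Lambda\,|\rho_\omega-\beta|$.

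Next I would reduce to positional strategies --- optimal strategies for mean-payoff are positional~\cite{Puterman94}, and the relative-revenue objective, being a long-run ratio objective, likewise admits positional optima --- and apply the ergodic theorem to the finite Markov chain induced by a positional $\sigma$: on each recurrent class $R$, $g_N\to\rho_R\in[0,1]$ and $(a_N+h_N)/N\to\lambda_R\in[c,\Lambda]$ a.s., so $\RER(\sigma)=\sum_R\mathbb{P}^\sigma_{s_0}[R]\rho_R$ and $\MP_\beta(\sigma)=\sum_R\mathbb{P}^\sigma_{s_0}[R]\lambda_R(\rho_R-\beta)$. The key structural fact is that $\MDP$ is communicating: from any state, letting honest miners extend the main chain $d$ consecutive times (a positive-probability event, by the bound above) resets the state to $s_0$. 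Hence for any reward function the optimal mean-payoff is state-independent and realized by a uniformly optimal positional strategy all of whose recurrent classes attain the common value; in particular $\RER^\ast$ is attained by a positional $\sigma^\ast$ with $\rho_R=\RER^\ast$ on every recurrent class (so $\rho_\omega=\RER^\ast$ a.s.\ under $\sigma^\ast$), and every recurrent class of every positional strategy has $\rho_R\le\RER^\ast$. Part~1 now follows: the correspondence applied to $\sigma^\ast$ gives $\MP_{\RER^\ast}(\sigma^\ast)=0$, so $\MP^\ast_{\RER^\ast}\ge0$, while $\rho_R\le\RER^\ast$ on all classes of any $\sigma$ gives $\MP^\ast_{\RER^\ast}\le0$; and $\beta\mapsto\MP^\ast_\beta$ is strictly decreasing with modulus $\ge c$ (from $\lambda_R\ge c$ in the per-class formula), so $\MP^\ast_{\beta^\ast}=0$ iff $\beta^\ast=\RER^\ast$. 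For Part~2, if $\beta\in[\RER^\ast-\epsilon,\RER^\ast]$ and $\sigma$ is mean-payoff-optimal for $r_\beta$ (w.l.o.g.\ positional and uniformly optimal), then $\MP^\ast_\beta\ge\MP^\ast_{\RER^\ast}=0$, so $\lambda_R(\rho_R-\beta)=\MP^\ast_\beta\ge0$ on every class, i.e.\ $\rho_R\ge\beta$, whence $\RER(\sigma)\ge\beta\ge\RER^\ast-\epsilon$; and $\RER(\sigma)\le\RER^\ast$ is immediate. The closing corollary then follows: Algorithm~\ref{alg:algorithm} returns $\overline\sigma$ optimal for $r_{\beta_{\mathrm{low}}}$ with $\beta_{\mathrm{low}}=\overline{\RER}\in[\RER^\ast-\epsilon,\RER^\ast]$, hence $\RER(\overline\sigma)\ge\overline{\RER}$, and $\overline\sigma$ is by the construction of Section~\ref{sec:methmodel} a selfish mining attack in the blockchain whose expected relative revenue equals $\RER(\overline\sigma)$.

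The step I expect to be the main obstacle is the passage from the path-wise correspondence to these value-level statements, i.e.\ reconciling the \emph{expectation-of-a-ratio} objective $\RER$ with the \emph{per-step} mean-payoff $\MP_\beta$. Because the adversary can mine on up to $d\cdot f$ blocks at once --- precisely where this setting departs from~\cite{SapirshteinSZ16} --- the rate $(a_N+h_N)/N$ genuinely depends on the strategy and even on the path (maintaining many private forks slows the honest chain), so the naive equivalence ``$\MP_\beta(\sigma)\ge0\Leftrightarrow\RER(\sigma)\ge\beta$'' fails for individual strategies and is only rescued by the uniform rate bounds together with the communicating property, which forces a uniformly optimal strategy to have a single long-run rate on its recurrent behavior. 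Pinning down the constant $c$ and establishing the communicating property (or a suitable weakening, e.g.\ a single maximal end component reachable from $s_0$) directly from the transition rules is the technical crux; the ergodic-theorem bookkeeping and the remaining algebra are routine.
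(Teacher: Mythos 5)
Your proposal is correct, and its final assembly of the corollary is the same as the paper's: the binary search maintains $\MP^\ast_{\beta_{\text{low}}} \geq 0$ with $\beta_{\text{low}} = \overline{\RER} \in [\RER^\ast-\epsilon,\RER^\ast]$, the second part of Theorem~\ref{thm:sound} then gives $\RER(\overline{\sigma}) \geq \beta_{\text{low}}$, and the MDP strategy of Section~\ref{sec:methmodel} is by construction an implementable attack, so the existence claim follows. Where you genuinely diverge is in how you justify the underlying theorem. The paper argues that \emph{every} strategy induces an ergodic Markov chain (via reachability of $s_0$), applies the strong law of large numbers to get almost-sure limits of $R^A_N$ and $R^H_N$, and then ``divides inside the expectation''---a step that is only valid because under the asserted ergodicity the limiting ratio is almost-surely constant, a point the paper leaves implicit and which is delicate for history-dependent or multichain strategies. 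You instead work per recurrent class of a positional strategy, prove uniform burial-rate bounds $c \leq \liminf (a_N+h_N)/N \leq \Lambda$, and use the communicating property of the MDP to force every reachable recurrent class of an optimal strategy to attain the common optimal gain, whence $\rho_R \geq \beta$ classwise and $\RER(\overline{\sigma}) \geq \beta_{\text{low}}$. This buys rigor exactly where the paper is loose (the expectation-of-ratio versus per-step mean-payoff reconciliation, and your explicit observation that $\MP_\beta(\sigma)\geq 0 \not\Rightarrow \RER(\sigma)\geq\beta$ for arbitrary multichain strategies), at the cost of invoking positional optimality for the long-run ratio objective, which you assert rather than prove; note, however, that this assertion is only load-bearing for the upper-bound/tightness direction of Theorem~\ref{thm:sound}, not for the lower-bound corollary under review, for which your argument needs only the standard positional optimality of mean-payoff, the communicating property (your honest-blocks-reset argument matches the paper's reachability argument), and the strictly positive rate $c$.
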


\subsection{Key Features and Limitations}\label{sec:methdiscussion}

%We conclude this section with a discussion of the key features of our selfish mining attack and the procedure for its formal analysis, as well as their limitations.

\paragraph{Key features.} The key features of our selfish mining attack and formal analysis are as follows:
\begin{enumerate}[leftmargin=*]
    \item {\em Fully automated analysis.} Manual (i.e.~non-automated) analysis of optimal selfish mining attacks is already challenging and technically involved for PoW blockchains, where the adversary only grows a single private fork~\cite{EyalS18}. Hence, it would be even more difficult and potentially intractable in blockchains based on efficient proof systems. By modelling our selfish mining attack as an MDP and reducing the analysis to solving mean-payoff MDPs, we leverage existing methods for formal analysis of MDPs to obtain a {\em fully automated analysis} procedure, thus avoiding the necessity for tedious manual analyses.
    \item {\em Formal guarantees on correctness.} Our analysis provides {\em formal guarantees} on the correctness of its output. Again, this is achieved by formally reducing our problem to solving mean-payoff MDPs for which exact algorithms with formal correctness guarantees are available~\cite{storm,prism}.
    \item {\em Flexibility of the analysis.} Our analysis is agnostic to the values of system model and attack parameters and it is {\em flexible} to their changes. Hence, it allows us to tweak the parameter values and study their impact on the optimal expected relative revenue, while preserving formal guarantees on the correctness. To illustrate the flexibility, observe that:
    \begin{itemize}
        \item If the attack depth $d$, forking number $f$ or maximal fork length $l$ of the attack change, then both the state space and the action space of the MDP change.
        \item If the relative resource of the adversary $p$ or the switching probability $\gamma$ change, then the transition function of the MDP changes.
        \item As we show in our experiments in Section~\ref{sec:eval}, a change in any of these parameter values results in a change in the optimal expected relative revenue that the adversary can achieve.
    \end{itemize}
    The flexibility of our analysis is thus a significant feature, since it again avoids the need for tedious manual analyses for different parameter values that give rise to different MDPs. 
\end{enumerate}

\smallskip{\em Limitations.} While our formal analysis computes an optimal selfish mining strategy in the MDP up to a desired precision, note that there still exist selfish mining attacks that do not correspond to any strategy in our MDP model. Hence, the strategy computed by our method is optimal only with respect to the {\em subclass} of strategies captured by the MDP model. There are two key reasons behind the incompleteness of our MDP model:
\begin{enumerate}[leftmargin=*]
    \item {\em Bounded forks.} In order to ensure finiteness of our MDP model, we impose an upper bound $l$ on the maximal length of each private fork. This means that the adversary cannot grow arbitrarily long private forks. Since the probability of the adversary being able to grow extremely long private forks is low, we believe that this limitation does not significantly impact the expected relative revenue of selfish mining strategy under this restriction.
    \item {\em Disjoint forks vs fork trees.} Our attack grows private forks on different blocks in the main chain. However, rather than growing multiple disjoint private forks, a more general class of selfish mining attacks would be to allow growing {\em private trees}. We stick to disjoint private forks in order to preserve {\em computational efficiency} of our analysis, since allowing the adversary to grow private trees would result in our MDP states needing to store information about each private tree topology, which would lead to a huge blow-up in the size of the MDP. In contrast, storing disjoint private forks only requires storing fork lengths, resulting in smaller MDP models.
\end{enumerate}
We conclude by noting that, while our formal analysis is incomplete due to considering a subclass of selfish mining attacks, the formal guarantees provided by our analysis still ensure that we compute a {\em true lower bound} on the expected relative revenue that a selfish mining attack achieves.

\section{Experimental Evaluation}\label{sec:eval}

\newcommand{\RQ}[1]{\bf{RQ#1}}

We implement the MDP model and the formal analysis procedure presented in Section~\ref{sec:methodology} and perform an experimental evaluation towards answering the following research questions (RQs):
\begin{enumerate}[label=\RQ{\arabic*}]
    \item What is the expected relative revenue that our selfish mining strategy achieves? How does it compare to direct extensions of classic selfish mining attacks on PoW blockchains~\cite{EyalS18} or to mining honestly?
    \item How do different values of the System Model parameters impact the expected relative revenue that our selfish mining attack can achieve? The System Model parameters include the relative resource of the adversary $p \in [0,1]$ and the switching probability $\gamma \in [0,1]$.
    %\item RQ1: What can the adversary do in the MC model and how does the optimal payoff depend on the number of VDFs used by the adversary? For each $\gamma\in\{0,0.25,0.5,0.75,1\}$, have one plot with line for each number of VDFs $k \in \{1,20,30,40,50\}$. Determined by the size of the tree, width always 5, change depth from 4,6,8,10. On x-axis we have the relative resource $p$, and on y-axis we have relative reward (chain quality).

    %\item RQ2: What can the adversary do in the MDP model, and how much does the additional degree of freedom help? Same plots as above just for MDP, compare results. Fix the number of We will move some to appendix.

    %\item RQ3: How do different values of the switching probability $\gamma$ impact the expected relative revenue that our selfish mining attack can achieve?
\end{enumerate}

\begin{figure*}[t]
    \centering
    \begin{subfigure}[b]{0.31\linewidth}
        \includegraphics[width=\linewidth]{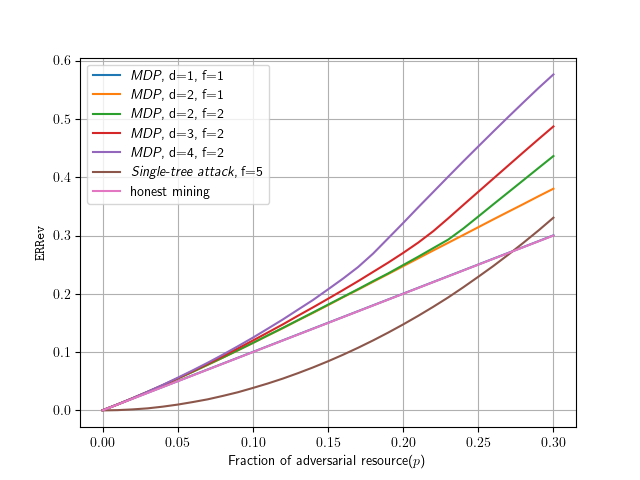}
        \caption{$\gamma=0$}
        \label{fig:exp_0}
    \end{subfigure}
    \begin{subfigure}[b]{0.31\linewidth}
        \includegraphics[width=\linewidth]{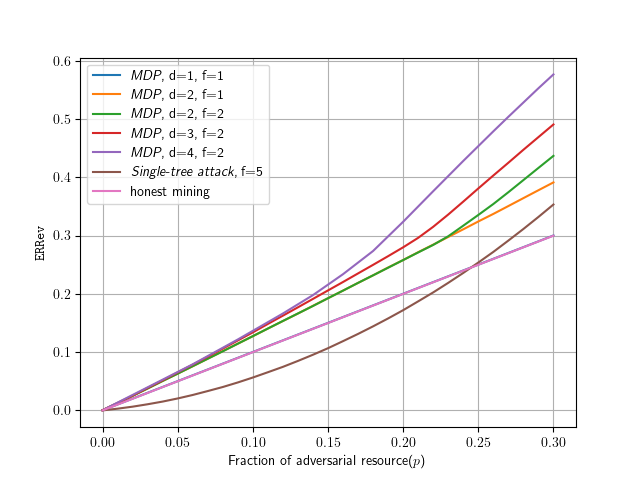}
        \caption{$\gamma=0.25$}
        \label{fig:exp_25}
    \end{subfigure}
    \begin{subfigure}[b]{0.31\linewidth}
        \includegraphics[width=\linewidth]{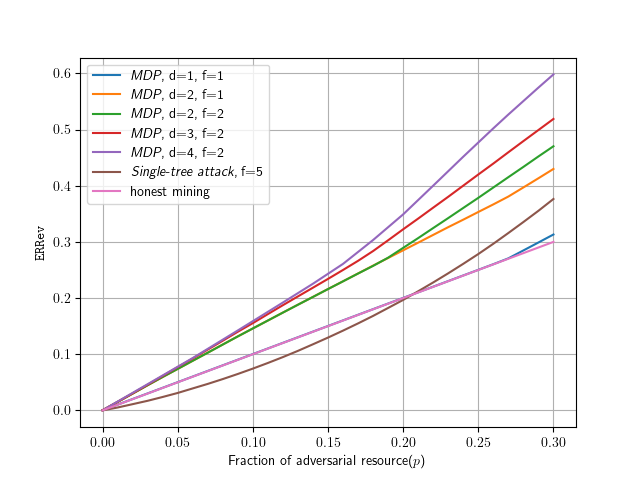}
        \caption{$\gamma=0.5$}
        \label{fig:exp_50}
    \end{subfigure}
    \begin{subfigure}[b]{0.31\linewidth}
        \includegraphics[width=\linewidth]{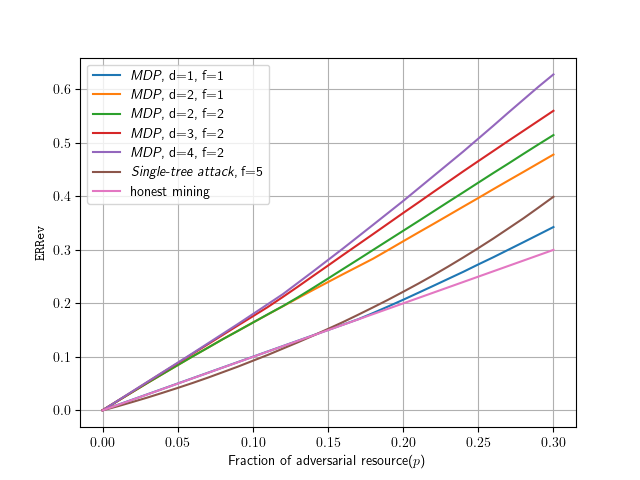}
        \caption{$\gamma=0.75$}
        \label{fig:exp_75}
    \end{subfigure}
    \begin{subfigure}[b]{0.31\linewidth}
        \includegraphics[width=\linewidth]{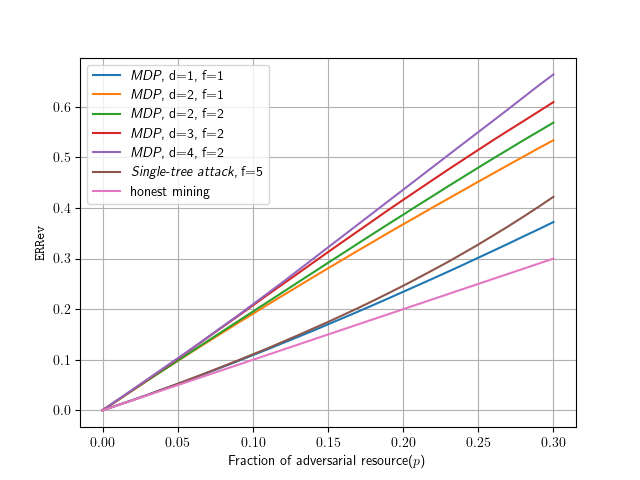}
        \caption{$\gamma=1$}
        \label{fig:exp_100}
    \end{subfigure}
    \caption{Comparison of expected relative revenue $\RER$  as a function of adversarial resource achieved by our attack and the baselines for different values of $\gamma$.}
    \label{fig:mdpplots}
\end{figure*}

{\em Baselines.} To answer these RQs, we compare our selfish mining attack against two baselines:
\begin{enumerate}[leftmargin=*]
    \item {\em Honest mining strategy.}
    This is the strategy extending only the leading block of the main chain. 
    \item {\em Single-tree selfish mining attack.} This is the attack strategy that exactly follows the classic selfish mining attack in Bitcoin proposed in~\cite{EyalS18}, however it grows a private tree fork rather than a private chain. Analogously as in~\cite{EyalS18}, the adversary publishes the private tree whenever the length of the main chain catches up with the depth of the private tree. We omit the formal model of this baseline due to space limitations. We also use this baseline to empirically evaluate how severe is the second limitation discussed in Section~\ref{sec:methdiscussion}.
\end{enumerate}

{\em Experimental setup.} We perform an experimental comparison of our attack and the two baselines for the values of the adversarial relative resource $p \in [0,0.3]$ (in increments of $0.01$) and the switching probability $\gamma \in \{0,0.25,0.5,0.75,1\}$. As for the parameters of each selfish mining attack:
\begin{itemize}[leftmargin=*]
    \item For our selfish mining attack, we set the maximal length of private forks $l=4$ and consider all combinations \\ $(d,f) \in \{(1,1), (2,1), (2,2), (3,2), (4,2)\}$ of the values of the attack depth $d$ and the forking number $f$.
    \item For the single-tree selfish mining attack baseline, we set the maximal depth of the private tree $l=4$ to match our maximal private fork length, and the maximal width of the private tree $f=5$. 
\end{itemize}
All experiments were run on Ubuntu 20, 2.80GHz 11th Gen Intel(R) Core(TM) i7-1165G7 CPU, 16 GB RAM, and 16 GB SWAP SSD Memory. For solving mean-payoff MDPs, we use the probabilistic model checker Storm~\cite{storm}, a popular MDP analysis tool within the formal methods community\footnote{ Refer to our github repository for our implementation details: \href{https://github.com/mehrdad76/Automated-Selfish-Mining-Analysis-in-EPS-Blockchains}{https://github.com/mehrdad76/Automated-Selfish-Mining-Analysis-in-EPS-Blockchains} }.

\begin{table}[htb!]
\centering
\begin{tabular}{ |p{4cm}||p{2cm}|p{1cm}|  }
 \hline
 Attack Type & Parameters &Time (s)\\
 \hline
 Our Attack   & $d=1,f=1$    &3.8\\
 Our Attack   & $d=2,f=1$    &5.4\\
 Our Attack   & $d=2,f=2$    &61.7\\
 Our Attack   & $d=3,f=2$    &2295.3\\
 Our Attack   & $d=4,f=2$    &77761.7\\
 Single-tree Selfish Mining   & $f=5$    &2406.8\\
 \hline
\end{tabular}
\caption{Runtimes for various attack types and parameter settings and $\gamma=0.5$.}
\label{tab:times}
\end{table}

\smallskip{\em Results.} 
Table~\ref{tab:times} shows the runtimes of both our selfish mining attack as well as the single-tree selfish mining attack given various parameter settings and for a fixed switching parameter of $\gamma=0.5$. We only show timings for $\gamma=0.5$ as we found the runtimes of our experiments to be very similar across all $\gamma$ parameter settings.
As can be seen from Table~\ref{tab:times}, increasing the depth of the attack increases the runtime of our evaluation by an order of magnitude due to the exponential increase in state space.

Experimental results are shown in Figure~\ref{fig:mdpplots}, showing plots for each $\gamma \in \{0,0.25,0.5,0.75,1\}$.
As we can see from the plots, our selfish mining attack consistently achieves {\em higher expected relative revenue} $\RER$ than both baselines for each value of $\gamma$, except when $d=1$ and $f=1$. Indeed, already for $d=2$ and $f=1$ when the adversary grows a single private fork on the first two blocks in the main chain, our attack achieves higher $\RER$ than both baselines. This shows that growing private forks at two different blocks already provides a more powerful attack than growing a much larger private tree at a single block. Hence, our results indicate that growing disjoint private forks rather than trees is not a significant limitation, justifying our choice to grow private forks towards making the analysis computationally tractable.

The attained $\RER$ grows significantly as we increase $d$ and $f$ and allow the adversary to grow more private forks. In particular, for $d=4$, $f=2$, and relative adversarial resource $p=0.3$, our attack achieves $\RER$ that is larger by at least $0.2$ than that of both baselines, for all values of the switching probability $\gamma$. This indicates a {\em significant advantage} of selfish mining attacks in efficient proof systems blockchains compared to PoW, as the ability to simultaneously grow multiple private forks on multiple blocks translates to a much larger $\RER$. Our results suggest that further study of techniques to reduce the advantage of the adversary when mining on several blocks is important in order to maintain reasonable chain quality for efficient proof systems blockchains.

Finally, we notice that larger $\gamma$ values correspond to larger $\RER$ in our strategies. 
This is expected, as larger $\gamma$ values introduce bias in the likelihood of the adversarial chain becoming the main chain. 
This is most pertinently observed in the case of  $d=f=1$: since $d=f=1$ corresponds to a strategy that only mines a private block on the leading block in the main chain, the only way to deviate from honest mining is to withhold a freshly mined block and reveal it together with the occurrence of a freshly mined honest block. As we can see in the plots, for $\gamma<0.5$ the achieved $\RER$ of the strategy with $d=f=1$ corresponds to that of honest mining and the two lines in plots mostly overlap, whereas this strategy only starts to pay off for $\gamma>0.5$ and for the proportion of resource $p>0.25$.
Altogether, this suggests that further and careful analysis of the control of the adversary over the broadcast network as well as the fork choice breaking rule is necessary. %for honest miners is necessary for a comprehensive study of blockchain security.

\smallskip{\em Key takeaways.} The key takeaways of our experimental evaluation are as follows:
\begin{itemize}[leftmargin=*]
    \item Our selfish mining attack achieves {\em significantly higher} $\RER$ than both baselines, reaching up to $0.2$ difference in $\RER$. Thus, our results strongly suggest that growing private forks at multiple blocks is much more advantageous than growing all forks on the first block in the main chain.
    \item Our results suggest that growing private trees rather than disjoint private forks would {\em not} lead to a significant improvement in the adversary's $\RER$. Hence, the second limitation of our attack discussed in Section~\ref{sec:methdiscussion} does not seem to be significant.
    \item Our results suggest that enhancing security against selfish mining attacks in efficient proof system blockchains requires further and careful analysis of the control that the adversary has over the broadcast system. In particular, for large values of the switching probability $\gamma$, even the simplest variant of our attack with $d=1$ and $f=1$ starts to pay off when $p>0.25$.
\end{itemize}

\section{Conclusion}\label{sec:conclusion}
We initiated the study of optimal selfish mining strategies for unpredictable blockchain protocols based on efficient proof systems. To this end, we considered a selfish mining objective corresponding to changes in chain quality and proposed a novel selfish mining attack that aims to maximize this objective. We formally modeled our attack as an MDP strategy and we presented a formal analysis procedure for computing an $\epsilon$-tight lower bound on the optimal expected relative revenue in the MDP and a strategy that achieves it for a specified precision $\epsilon>0$. The procedure is fully automated and provides formal guarantees on the correctness of the computed bound.

We believe that our work opens several exciting lines for future research. %by using techniques from formal modelling and verification to analyze the security of blockchains based on efficient proof systems. 
We highlight two particular directions.
First, our formal analysis only allows us to compute {\em lower bounds} on the expected relative revenue that an adversary can achieve. 
A natural direction of future research would be to consider computing {\em upper bounds} on the optimal expected relative revenue for fixed resource amounts.
Second, as discussed in Section~\ref{sec:methdiscussion}, our formal analysis only computes $\epsilon$-tight lower bounds on the expected relative revenue by following a strategy in our MDP model. However, our model in Section~\ref{sec:methmodel} introduces assumptions such as growing private forks instead of trees and bounding the maximal length of each fork for tractability purposes. It would be interesting to study whether these assumptions could be relaxed while still providing formal correctness guarantees. 

%\section*{Acknowledgements}
\begin{acks}
This work was supported in part by the ERC-2020-CoG 863818 (FoRM-SMArt) grant and the MOE-T2EP20122-0014 (Data-Driven Distributed Algorithms) grant.
\end{acks}

\bibliographystyle{ACM-Reference-Format}
\bibliography{sample-base.bib} 

%%%%%%%%%%%%%%%%%
% Appendix Proof
%%%%%%%%%%%%%%%%%

\newpage
\appendix
\section{NaS Mining Objectives}\label{app:sm_objectives}
Adversarial mining strategies in blockchains based on efficient proof systems can be analysed with respect to several adversarial goals. 
Here, we outline three such goals: double spending, short and long term selfish-mining.

The first objective is double spending, where one 
considers the probability of an adversarial chain overtaking the public, honest chain~\cite{WangKBKOTV19,CohenP23} (see Figure~\ref{fig:overtake}). 
Here the goal of the adversary is to rapidly and secretly grow a sufficiently long private chain such that this private chain eventually overtakes the honest chain, this way removing a presumably confirmed transaction.  
What ``sufficiently'' long means depends on the confirmation time in the chain, e.g., in Bitcoin one generally assumes a transaction that is six blocks deep in the chain to be confirmed.
%Analysis of selfish mining strategies under this objective aim to ensure the security of the blockchain against this event and to find the maximum amount of adversarial resources the blockchain can tolerate to ensure that this event only occurs with low probability.

The second objective, ``short-term selfish mining'',  considered eschews the goal of overtaking the honest chain completely and focuses simply on finding an adversarial mining strategy that is more profitable for the adversary rather than following the stipulated mining protocol~\cite{FanZ17} (see Figure~\ref{fig:reward}). 
The profitability of an adversarial mining strategy under this objective is measured by the \emph{total} number of adversarial blocks on the main chain. Like in the analyses of selfish mining strategies under the first objective, analyses of strategies under this second objective also focus on finding the largest fraction of adversarial resources the blockchain can tolerate in order to be secure under such adversarial strategies.

The final objective, ``long-term selfish mining'',  considered is directly related to attacking the chain quality of the underlying blockchain. Under this objective, the adversary's reward is not measured by the total number of adversarial blocks on the main chain, but the \emph{relative} number of adversarial blocks on the main chain. 
Figure~\ref{fig:cq} illustrates the difference between simply maximising the total reward as per objective $2$ and the relative reward. In the top chain in Figure~\ref{fig:cq}, the adversary's total reward is $4$ which is larger than the total reward of $3$ in the bottom chain. However, the relative reward of the adversary in the top chain is $\frac{2}{3}$ which is smaller than the relative reward of $1$ in the bottom chain. 
Note that the chain quality and the relative reward sum up to $1$, thus maximising the relative reward minimises the chain quality.

%In this work, we consider a selfish mining objective which is directly related to the chain quality and a selfish mining attack that optimizes this objective in blockchains based on efficient proof systems.
%The objective that we consider is the {\em expected relative revenue}, i.e.~the ratio of the expected revenue an adversarial coalition can achieve when following a given strategy compared to the total expected revenue of all parties. 
%In doing so, we study the \emph{expected change} in chain quality of the blockchain given an adversarial mining strategy (see Figure~\ref{fig:cq}), thus overcoming the limitations of prior work discussed above.

\begin{figure}[t]
\minipage{0.45\textwidth}
  \includegraphics[width=0.5\linewidth]{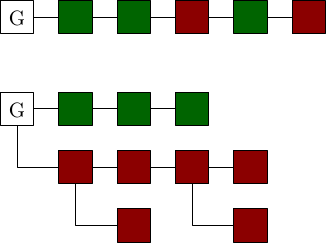}
  \caption{
  The top chain shows adversarial (red) and honest (green) miners extending the public chain. The bottom chain shows adversarial miners growing a private chain to overtake the public chain.
  %The top and bottom chains show adversarial (red) and honest (green) miners extending the public chain and growing a private chain to overtake the public chain respectively.
  }\label{fig:overtake}
\endminipage\hfill
\minipage{0.45\textwidth}
  \includegraphics[width=0.7\linewidth]{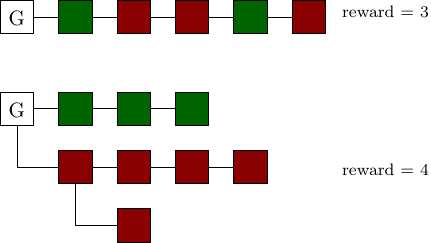}
  \caption{
  The chains show the reward of the adversary when following the honest strategy and when growing a private chain to overtake the public chain.}\label{fig:reward}
\endminipage\hfill
\minipage{0.45\textwidth}%
  \includegraphics[width=\linewidth]{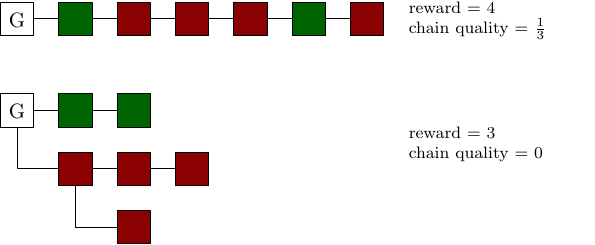}
  \caption{The chains show the reward of the adversary and chain quality of the chain when all users mine honestly and when the adversary grows a private chain.}\label{fig:cq}
\endminipage
\end{figure}

\section{Efficient Proof Systems}\label{sec:efficientproofsystems}

\paragraph{Proof of stake.} 
PoStake is a block leader election protocol where a leader is selected with probability proportionate to the amount of stake (i.e., coins) they hold in the ledger at the selection time.
Thus, a user with $p \in [0,1]$ fraction of stake is elected with probability proportionate to~$p$.
%The rationale behind PoS mechanisms is that users holding large proportions of stake should also maintain the liveness of the chain, and thus put in work to create new blocks and get rewarded accordingly for it.
Examples of longest-chain blockchains based on PoStake are Ouroboros~\cite{ouroboros} and post-merge Ethereum~\cite{ethmerge}.
%Algorand~\cite{algorand} is not a longest chain blockchain

\smallskip{\em Proofs of space and time.}
Proof of Space (PoSpace) is a protocol between a \emph{prover} and a \emph{verifier} whereby the prover stores some data and, upon a challenge from the verifier, has to return a solution to the challenge that involves reading a small portion of the data. 
The consensus protocol of blockchains based on PoST~\cite{CohenP23} use both PoSpace challenges as well as verifiable delay functions~\cite{BonehBBF18,PietrzakVDF,Wesolowski18} (VDFs).
VDFs are functions that are inherently sequential to compute but the correctness of computation is efficiently verifiable. 
As such, the process of mining blocks in such blockchains depends not only on the amount of space allocated to compute PoSpace challenges, but also on the amount of VDFs to compute VDF challenges. 

\section{Proof of Theorem~\ref{thm:sound}}\label{app:proof_main}

\main*

\begin{proof}
    Before we prove the theorem, we start by making several observations and introducing additional notation. The proof assumes familiarity with basic notions of Markov chains and MDPs, for which we refer the reader to~\cite{norris,Puterman94}. First, we observe that every strategy in the MDP $\MDP = (\States,\Actions,\transition,s_0)$ gives rise to an {\em ergodic Markov chain}. To see this, observe that the initial state $s_0$ is reached with positive probability from any other state in the MDP, since honest miners with positive probability mine and add new blocks to the main chain for $d$ consecutive time steps upon which the MDP would return to the initial state $s_0$.
    
    Next, we define two auxiliary reward functions in the MDP $\MDP$:
    \begin{itemize}[leftmargin=*]
        \item $r^H:\States \times \Actions \times \States \rightarrow \mathbb{R}$, which incurs reward $1$ for each honest miner's block accepted at depth $> d$ upon the action.
        \item $r^A:\States \times \Actions \times \States \rightarrow \mathbb{R}$, incurs reward $1$ for each adversary's block accepted at depth $> d$ upon the action.
    \end{itemize}
    For each integer $n \geq 1$, denote by $r^A_n$ and $r^H_n$ the rewards incurred at the $n$-th step in the MDP. Since the expected relative revenue is defined as the expected ratio of the revenues of the adversary and the total revenue of all parties, for every strategy $\sigma$ we have
    \begin{equation*}
        \RER(\sigma) = \mathbb{E}^{\sigma}\Big[\liminf_{N\rightarrow\infty} \frac{\sum_{n=1}^N r^A_n}{\sum_{n=1}^N r^A_n + r^H_n} \Big].
    \end{equation*}
    On the other hand, for each $\beta \in [0, 1]$ we have $r_\beta = r^A - \beta \cdot (r^H + r^A)$.

    Finally, for each $N \geq 1$, let $R^A_N = (\sum_{n=1}^N r^A_n) / N$ and $R^H_N = (\sum_{n=1}^N r^H_n) / N$. Then $0 \leq R^H_N, R^A_N \leq l$ for each $N \geq 1$, where $l$ is the maximal length of a private fork. This is because, at every time step up to $l$ new blocks can be accepted in the main chain at the depth greater than $d$. Hence, since the MDP $\MDP$ under every strategy gives rise to an ergodic Markov chain and since we showed that $R^A_N$ and $R^H_N$ are bounded, it follows from the Strong Law of Large Numbers for Markov chains~\cite[Theorem~1.10.2]{norris} that under every MDP strategy the limits $\lim_{N\rightarrow\infty} R^A_N$ and $\lim_{N\rightarrow\infty} R^H_N$ exist and almost-surely converge to some value. Moreover, we also have $\lim_{N \rightarrow \infty} (R^A_N + R^H_N) \geq \delta > 0$ almost-surely, where $\delta = \frac{1-p}{1-p+p\cdot d\cdot f}$. To see this, recall from Section~\ref{sec:methmodel} that at every time step the probability of an honest miner adding a new block to the main chain and thus either $r^H_n = 1$ or $r^A_n = 1$ with probability $\delta = \frac{1-p}{1-p+p\cdot d\cdot f}$. %the only MDP actions that may result in new blocks being accepted in the main chain at depth greater than $d$ are either (1)~honest miners mining a new block and adding it to the main chain, in which case $r^H_n = 1$ or $r^A_n = 1$, or (2)~the adversary publishing a private fork strictly longer than the main chain, in which case at most $l$ new blocks are added to the main chain.
    
    We are now ready to prove the theorem. To prove the first part of the theorem claim, observe that for each $\beta \in [0,1]$ and for each strategy $\sigma$ we have
    \begin{equation}\label{eq:eq1}
    \begin{split}
        \MP_{\beta}(\sigma) &= \mathbb{E}^{\sigma}\Big[ \liminf_{N\rightarrow\infty} \frac{\sum_{n=1}^N r^A_n - \beta \cdot (r^H_n + r^A_n)}{N}\Big] \\
        &= \mathbb{E}^{\sigma}\Big[ \liminf_{N\rightarrow\infty} R^A_N - \beta \cdot (R^H_N + R^A_N)\Big] \\
        &= \mathbb{E}^{\sigma}\Big[ \lim_{N\rightarrow\infty} R^A_N - \beta \cdot (R^H_N + R^A_N)\Big],
        %&= \mathbb{E}^{\sigma}\Big[ \lim_{N\rightarrow\infty} R^A_N \Big] - \beta \cdot \mathbb{E}^{\sigma}\Big[ \lim_{N\rightarrow\infty} R^H_N + R^A_N \Big],
    \end{split}
    \end{equation}
    where the last limit exists since $\lim_{N\rightarrow\infty} R^A_N$ and $\lim_{N\rightarrow\infty} R^A_N + R^H_N$ both exist and are almost-surely finite. Hence, $\MP_\beta(\sigma) = 0$ if and only if $\mathbb{E}^{\sigma}[ \lim_{N\rightarrow\infty} R^A_N - \beta \cdot (R^H_N + R^A_N)] = 0$. Since $\lim_{N \rightarrow \infty} (R^A_N + R^H_N) \geq \delta > 0$ almost-surely, we may divide the expression in the expectation by $R^A_N + R^H_N$ to get that $\MP_\beta(\sigma) = 0$ if and only if
    \[ \RER(\sigma) = \mathbb{E}^{\sigma}\Big[ \lim_{N\rightarrow\infty} \frac{R^A_N}{R^H_N + R^A_N} \Big] = \beta. \]
    Thus, we have $\MP^\ast_\beta = \sup_\sigma \MP_\beta(\sigma) = 0$ if and only if 
    \[ \beta = \sup_\sigma \mathbb{E}^{\sigma}\Big[ \lim_{N\rightarrow\infty}\frac{R^A_N}{R^A_N + R^H_N}\Big] = \RER^\ast, \]
    which concludes the proof of the first part of the theorem claim.
    
    To prove the second part of the theorem claim, let $\epsilon > 0$, $\beta \in [\RER^\ast-\epsilon,\RER^\ast]$ and suppose that $\sigma$ is a strategy such that $\MP_{\beta}(\sigma) = \MP^\ast_{\beta}$. We need to show $\RER(\sigma) \in [\RER^\ast-\epsilon,\RER^\ast]$.

    Since $\beta \leq \RER^\ast$ and since $\MP^\ast_{x}$ is a monotonically decreasing function in $x\in[0,1]$, we have that $\MP_{\beta}(\sigma) = \MP^\ast_{\beta} \geq \MP^{\ast}_{\RER^\ast} = 0$, by the first part of the theorem. Analogously as in eq.~\eqref{eq:eq1}, we get
    \begin{equation*}
        \MP_{\beta}(\sigma) 
        = \mathbb{E}^{\sigma}\Big[ \lim_{N\rightarrow\infty} R^A_N - \beta \cdot (R^H_N + R^A_N)\Big] \geq 0.
    \end{equation*}
    Then, since $\lim_{N \rightarrow \infty} (R^A_N + R^H_N) \geq \delta > 0$ almost-surely, we may divide the expression in the expectation by $R^A_N + R^H_N$ to get
    \begin{equation*}
        \mathbb{E}^{\sigma}\Big[ \lim_{N\rightarrow\infty} \frac{R^A_N}{R^A_N = R^H_N} - \beta\Big] \geq 0.
    \end{equation*}
    and thus $\RER(\sigma) \geq \beta$. Thus, as $\beta \in [\RER^\ast-\epsilon,\RER^\ast]$, we conclude that $\RER(\sigma) \in [\RER^\ast-\epsilon,\RER^\ast]$, as desired (where $\RER(\sigma) \leq \RER^\ast$ follows by the definition of $\RER^\ast)$.
\end{proof}

\end{document}